\documentclass[a4paper,onecolumn,accepted=2025-07-11]{quantumarticle}
\pdfoutput=1

\usepackage[utf8]{inputenc}
\usepackage[english]{babel}

\usepackage{xcolor}
\definecolor{mygreen}{RGB}{0, 111, 98}
\definecolor{myorange}{RGB}{255, 117, 0}

\usepackage[hidelinks]{hyperref}       
\hypersetup{colorlinks=true, urlcolor=mygreen, linkcolor=myorange, citecolor=mygreen }
\usepackage{url}

\usepackage{booktabs}
\usepackage{array}
\usepackage{makecell}
\usepackage{multirow}

\usepackage{amsmath}
\usepackage{amsthm}
\usepackage{amsfonts}
\usepackage{amssymb}
\usepackage{braket}
\usepackage{bm}

\usepackage{graphicx}
\usepackage{caption}
\usepackage{float}

\usepackage{tikz}
\usetikzlibrary{shadows}
\usetikzlibrary{matrix, positioning}

\usepackage[numbers,compress]{natbib}
\bibliographystyle{unsrtnat}

\usepackage[compat=0.6]{yquant}
\useyquantlanguage{groups}
\usetikzlibrary{fit,quotes,positioning}
\pgfdeclareshape{yquant-multictrlshape}{
	\inheritsavedanchors[from=yquant-circle]
	\foreach \anc in {center, north, north east, east, south east, south, south west, west, north west} {
		\inheritanchor[from=yquant-circle]{\anc}
	}
	\inheritanchorborder[from=yquant-circle]
	\backgroundpath{
		\pgfpathmoveto{\pgfqpoint{-0.707107\dimexpr\xradius\relax}{-0.707107\dimexpr\yradius\relax}}
		\pgfpathlineto{\pgfqpoint{0.707107\dimexpr\xradius\relax}{0.707107\dimexpr\yradius\relax}}
		\pgfpathellipse{\pgfpointorigin}
		{\pgfqpoint{\xradius}{0pt}}
		{\pgfqpoint{0pt}{\yradius}}
	}
	\inheritclippath[from=yquant-circle]
}
\yquantset{multictrl/.style={every control/.style={shape=yquant-multictrlshape,radius=1.05mm}}}
\yquantset{plusctrl/.style={/yquant/every control/.style={/yquant/operators/every not}, every positive control/.style={}}}

\usepackage{qcircuit}
\usepackage{environ}
\newcommand{\myqctmp}[2][0.25]{\Qcircuit @C=#2em @R=#1em @!R}
\NewEnviron{myqcircuit}[1][0.25]{\vcenter{\myqctmp[#1]{0.75} {\BODY}}}
\NewEnviron{myqcircuit*}[2]{\vcenter{\myqctmp[#1]{#2} {\BODY}}}


\newtheorem{theorem}{Theorem}[section]
\newtheorem{definition}[theorem]{Definition}
\newtheorem{lemma}[theorem]{Lemma}
\newtheorem{corollary}[theorem]{Corollary}
\newtheorem{remark}{Remark}
\usepackage{enumitem}
\newenvironment{assumption}[1][]
  {\begin{enumerate}[label=(A\arabic*), ref=A\arabic*, #1]}
  {\end{enumerate}}

\numberwithin{equation}{section}


\usepackage{ulem}

\author{Chunlin Yang $^*$}
\affiliation{School of Mathematical and Sciences, Harbin Engineering University, China}
\author{Zexian Li $^*$}
\affiliation{Department of Applied Mathematics, The Hong Kong Polytechnic University, China}
\author{Hongmei Yao}
\affiliation{School of Mathematical and Sciences, Harbin Engineering University, China}
\email{hongmeiyao@163.com}
\author{Zhaobing Fan}
\affiliation{School of Mathematical and Sciences, Harbin Engineering University, China}
\author{Guofeng Zhang}
\affiliation{Department of Applied Mathematics, The Hong Kong Polytechnic University, China}
\author{Jianshe Liu}
\affiliation{College of Underwater Acoustic Engineering, Harbin Engineering University, China}

\begin{document}
	
\title{Dictionary-based Block Encoding of Sparse Matrices with Low Subnormalization and Circuit Depth}

\maketitle
\def\thefootnote{*}\footnotetext{These authors contributed equally to this work}\def\thefootnote{\arabic{footnote}}
\begin{abstract}
    Block encoding severs as an important data input model in quantum algorithms, enabling quantum computers to simulate non-unitary operators effectively. In this paper, we propose an efficient block-encoding protocol for sparse matrices based on a novel data structure, called the dictionary data structure, which classifies all non-zero elements according to their values and indices. Non-zero elements with the same values, lacking common column and row indices, belong to the same classification in our block-encoding protocol's dictionary. When compiled into the \{\rm U(2), CNOT\} gate set, the protocol queries a $2^n \times 2^n$ sparse matrix with $s$ non-zero elements at a circuit depth of $\mathcal{O}(\log(ns))$, utilizing $\mathcal{O}(n^2s)$ ancillary qubits. This offers an exponential improvement in circuit depth relative to the number of system qubits, compared to existing methods~\cite{clader2022quantum,zhang2024circuit} with a circuit depth of $\mathcal{O}(n)$. Moreover, in our protocol, the subnormalization, a scaled factor that influences the measurement probability of ancillary qubits, is minimized to $\sum_{l=0}^{s_0}\vert A_l\vert$, where $s_0$ denotes the number of classifications in the dictionary and $A_l$ represents the value of the $l$-th classification. Furthermore, we show that our protocol connects to linear combinations of unitaries (LCU) and the sparse access input model (SAIM). To demonstrate the practical utility of our approach, we provide several applications, including Laplacian matrices in graph problems and discrete differential operators.
\end{abstract}

\tableofcontents
	
\section{Introduction}
\label{section: introduction}
Quantum algorithms have demonstrated unprecedented potential in solving classically intractable problems, exemplified by landmark algorithms such as the Deutsch-Jozsa algorithm~\cite{deutsch1992rapid}, Shor's algorithm~\cite{shor1994algorithms}, and the HHL algorithm~\cite{harrow2009quantum}. A critical enabler of these advancements lies in efficient data input models, which bridge classical information with quantum processing. Commonly used models include block encoding~\cite{gilyen2019quantum}, SAIM~\cite{harrow2009quantum, gilyen2019quantum, berry2007efficient, childs2011simulating, childs2010on, berry2014exponential, childs2017quantum, chakraborty2019thepower, babbush2023exponential}, and LCU~\cite{LongGui-Lu_2006,childs2012hamiltonian}. Recent studies have shown that many quantum algorithms can be unified and reconstructed within the framework of quantum singular value transformation (QSVT) \cite{gilyen2019quantum, martyn2021grand}. These algorithms rely on the block encoding, enabling the embedding of arbitrary matrices into unitary operators,
\begin{equation*}
	U_{A} = \begin{pmatrix}
		A/\alpha & * \\
		* & *
	\end{pmatrix},
\end{equation*}
where $*$ denotes a matrix block yet to be determined. The factor $\alpha$, called \textit{subnormalization}, scales $A$ to ensure $\left\|A/\alpha\right\|_{\rm sp} \leq 1$, since the singular values of any matrix block of a unitary must not be larger than $1$, where $\left\|\cdot\right\|_{\rm sp}$ denotes the spectral norm.

Given the important role of block encoding in quantum computation, substantial research efforts have been dedicated to optimizing its implementation. Special matrices including density operators, the hierarchical matrices, pseudo-differential operators, and pairing Hamiltonians, have been studied in \cite{low2019hamiltonian,van2019improvements,nguyen2022block,li2023efficient,liu2025efficient}.
Generally, for dense matrices, Kerenidis and Prakash~\cite{kerenidis2020quantum}, and Chakraborty et al.~\cite{chakraborty2019thepower} showed how to implement block encoding efficiently of matrices stored in quantum data structures. Based on the quantum random access memory query model, Clader et al.~\cite{clader2022quantum} developed several block-encoding methods for a dense matrix of classical data. Using single- and two-qubit gates, a method, called FABLE~\cite{camps2022fable,parker2024sfable}, was proposed to generate approximate quantum circuits for block-encoding matrices in a fast manner, while Li et al.~\cite{li2025binary} improved it by fast computational methods for block encoding classical matrices with few ancillary qubits, low computing time, and low quantum gate counts. For sparse matrices, Gily{\'e}n et al.~\cite{gilyen2019quantum} introduced a foundational block encoding, which was subsequently improved by using preamplification. If a sparse matrix has a well-defined structure, Camps et al.~\cite{camps2024explicit} presented a scheme to efficiently block encode it. Furthermore, if a sparse matrix has an arithmetical structure, S\"{u}nderhauf et al.~\cite{sunderhauf2024block} developed a novel block encoding scheme and provided two improved schemes using preamplification and state preparation to reduce \textit{subnormalization}.

The structural characteristics of matrices significantly influence the implementation and resource requirements of algorithms, which in turn affect their efficiency. Sparse structured matrices, in particular, stand out due to their substantial number of zero elements and the regular distribution of non-zero elements. They have widely used applications, including fluid mechanics \cite{aki2002quantitative, jensen2011computational}, image processing \cite{xiao2021bayesian, yuan2024simultaneous}, and network analysis \cite{abusalah2020accelerated, shen2025weak}. Thanks to the characteristics, it is possible to store only non-zero elements to compress storage. 

Despite the advantages of sparse structured matrices, existing block encoding schemes face certain limitations when dealing with them. Block encoding in \cite{camps2024explicit} requires that every data value should appear in all columns, while the PREP/UNPREP scheme in \cite{sunderhauf2024block} requires $\lceil \log D\rceil\leq \lceil\log S_{c}\rceil,\lceil\log S_{r}\rceil$, where $D$ is the number of distinct data values, $S_{c},S_{r}$ are the maximum column and row sparsities, respectively. These constraints limit the applicability and flexibility of block encoding schemes. To address these issues, this paper proposes a block encoding scheme of sparse matrices based on a dictionary data structure. Crucially, the block-encoding protocol optimizes the trade-off between \textit{subnormalization} and \textit{circuit depth}. This enhancement allows for more efficient and flexible handling of sparse structured matrices in quantum algorithms, advancing the practical application of quantum computing.

The complexity of a block encoding scheme is related to the design of the scheme and the properties of the encoded matrix. To assess the cost of a block encoding, a \textit{figure of merit} \cite{sunderhauf2024block} is defined as
\begin{equation*}\label{equation: figure of merit}
	(T\mbox{\textit{-gate count}}) \cdot \mbox{\textit{subnormalization}}.
\end{equation*}
The \textit{subnormalization} is a factor to scale the matrix. A lower \textit{subnormalization} can increase the probability to measure $\ket{0}$ for ancillary qubits and lead to shorter circuits of algorithms within the framework of block encoding. The \textit{circuit depth} can reflect the time consumption of implementing quantum circuits. In this article, inspired by above \textit{figure of merit}, we introduce the definition of \textit{time metric} for block-encoding protocols, 
\begin{equation}\label{time metric}
	\mbox{\textit{time metric}} = \mbox{\textit{circuit depth}} \times \mbox{\textit{subnormalization}},
\end{equation}
as the quantitative indicator of the time complexity, where the \textit{circuit depth} is the maximum number of layers of element gates in a quantum circuit. A lower \textit{circuit depth} can enhance the feasibility and efficiency of quantum algorithms. In algorithms within the framework of QSVT, the time complexity is commonly characterized by the query complexity of block encoding, which can be directly translated into \textit{circuit depth}. This correspondence establishes that the \textit{time metric} defined in Equation~\eqref{time metric} provides a unified framework to analytically quantify the time complexity of quantum algorithms within the framework of QSVT.
 
Our main contribution in this paper is summarized as follows. 
\begin{itemize}  
    \item \textbf{Dictionary Data Structure:} We introduce a new data structure for sparse structured matrices, namely, dictionary data structure, which is shown in Table~\ref{dictionary}. It depends on the classification of all triplets of non-zero matrix elements. Within this framework, we can unify several existing block encoding schemes of sparse matrices, details in Appendix~\ref{subsection: several sparse block encodings within dictionary}.

    \item \textbf{Dictionary-based block encoding of sparse matrices:} Based on the dictionary data structure in Table \ref{dictionary}, we provide a dictionary-based block encoding of sparse matrices in Theorem~\ref{theorem: sparse block encoding} with the \textit{subnormalization} $\sum_{l=0}^{s_0-1}\left|A_l\right|$, where $s_0$ is the number of data items. For non-negative symmetric matrices, we extend the Hermitian version in Theorem~\ref{theorem: sparse Hermitian block encoding} based on the dictionary data structure in Table~\ref{table: symmetric dictionary data structure in this paper}. We also demonstrate that our dictionary-based block encoding of sparse matrices is a linear combination of unitaries. 

    \item \textbf{Circuit Depth Optimization:} 
    We analyze the \textit{circuit depth} of the dictionary-based block encoding of sparse matrices, achieving a scaling of $\mathcal{O}\big(\log(ns)\big)$ (Theorem~\ref{theorem: circuit depth}), where $n$ denotes the number of qubits encoding the matrix, $s$ is the number of non-zero elements in the matrix. We also analyze the \textit{circuit depth} of the PREP/UNPREP block encoding in \cite{sunderhauf2024block} as well as the block encoding using controlled state preparation in \cite{clader2022quantum}. These results are summarized in Table~\ref{table: time metric}.
\end{itemize}

\begin{table}[htbp]
    \centering
    \begin{tabular}{!{\vrule width \heavyrulewidth}c!{\vrule width \heavyrulewidth}c!{\vrule width \heavyrulewidth}c!{\vrule width \heavyrulewidth}c!{\vrule width \heavyrulewidth}c!{\vrule width \heavyrulewidth}}
        \hline\cline{1-5}
        \textbf{Block Encoding} & \multicolumn{2}{c!{\vrule width \heavyrulewidth}}{\textbf{Circuit Depth}} & \textbf{Ancillary qubits} & \textbf{Subnormalization} \\
        \hline\cline{1-5}
        \makecell{PREP/UNPREP~\cite{sunderhauf2024block}} & Lemma \ref{lemma: circuit depth of Sunderhauf} & $\mathcal{O}\left(n2^{n/2}\right)$ & $\Omega(4^n/n)$ & $\frac{\sqrt{S_{c}S_{r}}}{D}\sum_{d=0}^{D-1}\left|A_{d}\right|$  \\
        \hline
        \makecell{~\cite{clader2022quantum,zhang2024circuit}} & Lemma~\ref{lemma: comparison depth 2} & $\mathcal{O}\left(n\right)$ & $\mathcal{O}(2^{2n})$ & $\Vert A\Vert_F$   \\
        \hline
        \makecell{Dictionary-based \\ block encoding \\ (\textbf{This work)}} & Theorem \ref{theorem: circuit depth} & \textbf{$\mathcal{O}\left(\log(ns)\right)$} & $\mathcal{O}(n^2s)$ & \textbf{$\sum_{l=0}^{s_{0}-1}\left|A_l\right|$} \\
        \hline\cline{1-5}
    \end{tabular}
    \caption{\textit{Circuit depth} and \textit{subnormalization} of block-encoding protocols for encoding a sparse structured matrix $A \in\mathbb{C}^{2^n \times 2^n}$ with $s$ non-zero elements. The matrix $A$ can be encoded by a dictionary data structure specified in Table~\ref{table: dictionary data structure in this paper} with $s_0$ data items under our block encoding (or $D$ data items under PREP/UNPREP block encoding~\cite{sunderhauf2024block}). Here, $S_c$ and $S_r$ denote column and row sparsity, respectively. The number of data items is always not large than the number of non-zero elements in a matrix, that is, $s_0 \leq s$. For matrices with repeated elements, our approach achieves superior \textit{subnormalization} compared to the low-depth block-encoding protocols proposed in~\cite{clader2022quantum,zhang2024circuit}. This advantage is particularly evident in the \textit{time metric} defined in Equation~\eqref{time metric}, where our protocols offer an exponential improvement.}
    \label{table: time metric}
\end{table}

This paper is organized as follows. Section \ref{section: notations and conventions} establishes foundational notations and conventions. In Section \ref{section: block encoding}, we develop a dictionary data structure framework and propose a dictionary-based block encoding of sparse matrices with reduced \textit{subnormalization}, subsequently extending it to the Hermitian version for non-negative symmetric matrices. We further establish connections between this block encoding scheme and LCU, while rigorously analyzing its logarithmic \textit{circuit depth} implementation. Comparative benchmarks are provided against the PREP/UNPREP protocol in \cite{sunderhauf2024block} and controlled state preparation methods in \cite{clader2022quantum}. Section \ref{section: applications} demonstrates practical implementations for the signless Laplacian matrices in graphs and matrices of discrete differential operators, verifying the feasibility of the proposed block encoding of sparse matrices with Python code (\href{https://github.com/ChunlinYangHEU/DIBLE}{https://github.com/ChunlinYangHEU/DIBLE}). Conclusions and future directions are presented in Section \ref{section: conclusion and outlook}. Appendix~\ref{section: details of block encoding} details existing dictionary-based block-encoding protocols of sparse matrices, with \textit{circuit depth} comparisons in Appendix~\ref{section: Circuit Depth Comparison}. The application to generalized eigenvalue problems (GEPs) in ocean acoustics appears in Appendix~\ref{section: GEPs in ocean acoustics}.

\section{Notations and Conventions}
\label{section: notations and conventions}
In this section, we introduce the notations and conventions used throughout this paper. 

Let $\left[m,n\right] = \left\{m,m+1,\cdots,n\right\}$, with the convention that $\left[n\right] =\left[1,n\right]$. $\ket{0}$ denotes the vector $e_{0} = \left(1,0\right)^{T}$, and $\ket{1}$ denotes the vector $e_{1} = \left(0,1\right)^{T}$. The $m$-fold tensor product $\ket{0}\otimes\cdots\otimes\ket{0}$ is compactly denoted $\ket{0}^{\otimes m}$. We denote the $N \times N$ identity matrix by $I_{N}$. For the special case of $2 \times 2$ identity matrices (common in quantum computing), we simply write $I \equiv I_2$. The row and column indices of a matrix always start from $0$ in this paper. The $j$-th column of $I_{N}$ is denoted by $\left|j\right>$, where $j\in \left[0,N-1\right]$. For a nonnegative integer $j\in\left[0,2^{n}-1\right]$, it has a binary representation
\begin{equation*}
	j = \left[ j_{n-1} \cdots j_{1} j_{0} \right] = j_{n-1} \times 2^{n-1} + \cdots + j_{1} \times 2^{1} + j_{0} \times 2^{0},
\end{equation*}
where $j_{k}\in\left\{0,1\right\}$, $k\in \left[0,n-1\right]$.  For a $1$-bit binary number $a$, $\overline{a}$ denotes the NOT operation of $a$. We denote modulo 2 addition/subtraction as $\oplus$ using an exclusive OR (XOR) operation on the corresponding
binary digits of each operand. For the $n$-bit binary operation, it has
$$
{\rm XOR}(j,0) = j\oplus 0 = j,\quad {\rm XOR}(j,j) = j\oplus j=0,\quad \forall j\in\mathbb{Z}.
$$
The state of a qubit is a superposition of $\ket{0}$ and $\ket{1}$, and a nonnegative integer $j = \left[j_{n-1}\cdots j_{1}j_{0}\right]$ can be prepared as a set of quantum states $\ket{j}\equiv\ket{j_{n-1}} \otimes \cdots \ket{j_{1}} \otimes \ket{j_{0}}$. For a complex $c=\vert c\vert e^{\imath\theta}$, its square root is uniquely defined as $\sqrt{c} = \sqrt{\left|c\right|}e^{\imath\theta/2}$.

The letters $H$, $X$, $Y$, and $Z$ are used to represent the Hadamard, Pauli-$X$, Pauli-$Y$, and Pauli-$Z$ matrices, respectively, which are defined as follows,
\begin{equation*}
	H=\frac{1}{\sqrt{2}} \begin{pmatrix}1&1\\1&-1\end{pmatrix},\quad
	X= \begin{pmatrix}0&1\\1&0\end{pmatrix},\quad
	Y= \begin{pmatrix}0&-\imath \\\imath&0\end{pmatrix},\quad
	Z= \begin{pmatrix}1&0\\0&-1\end{pmatrix}.
\end{equation*}

\begin{table}[htbp]
\centering
    \begin{tabular}{!{\vrule width \heavyrulewidth}c|c|c!{\vrule width \heavyrulewidth}}
        \hline \cline{1-3}
        \textbf{Notation} & \textbf{Mathematical form} & \textbf{Circuit} \\ 
        \hline \cline{1-3}
        $0$-CNOT
        & $ \ket{0}\bra{0}\otimes X + \ket{1}\bra{1}\otimes I=
        \begin{pmatrix}
            0 & 1 & 0 & 0\\
            1 & 0 & 0 & 0\\
            0 & 0 & 1 & 0\\
            0 & 0 & 0 & 1\\
        \end{pmatrix}
        $
        &
        \makecell{\begin{tikzpicture}
            \begin{yquant}
                qubit {} q0;
                qubit {} q1;
                cnot q1 ~ q0;
            \end{yquant}
		\end{tikzpicture}} \\ 
        \hline
        $1$-CNOT
        &
        $\ket{1}\bra{1}\otimes X + \ket{0}\bra{0}\otimes I=
        \begin{pmatrix}
            1 & 0 & 0 & 0\\
            0 & 1 & 0 & 0\\
            0 & 0 & 0 & 1\\
            0 & 0 & 1 & 0\\
        \end{pmatrix}
        $
        &
        \makecell{\begin{tikzpicture}
            \begin{yquant}
                qubit {} q0;
                qubit {} q1;
                cnot q1 | q0;
            \end{yquant}
		\end{tikzpicture}} \\ 
        \hline
        \multirow{2}{*}{SWAP} 
        & \makecell{$\begin{pmatrix}
            1 & 0 & 0 & 0 \\
            0 & 0 & 1 & 0 \\
            0 & 1 & 0 & 0 \\
            0 & 0 & 0 & 1
        \end{pmatrix}$}
        & \makecell{
        \begin{tikzpicture}
            \begin{yquant}
                qubit {} q1;
                qubit {} q0;
                swap (q0,q1);
            \end{yquant}
        \end{tikzpicture}} \\
        \cline{2-3}
        & \makecell{$2n$-qubit SWAP gate is an operator that swaps  \\
        the qubits of the two $n$-qubit registers pairwise.}
        & \makecell{\begin{tikzpicture}
            \begin{yquant}
                qubit {} q1;
                qubit {} q0;
                ["north east:$n$" 
                {font=\protect\footnotesize, inner sep=0pt}]
                slash q1;
                ["north east:$n$" 
                {font=\protect\footnotesize, inner sep=0pt}]
                slash q0;
                hspace {5pt} -;
                box {\rm SWAP} (q0,q1);
                hspace {5pt} -;
            \end{yquant}
        \end{tikzpicture}} \\
        \hline
        Multiplexor operation & \makecell{$U = \sum_{l=0}^{2^m-1}\ket{l}\bra{l}_{\rm idx}\otimes U_l$, where $U_l$ are $n$-qubit \\ unitaries controlled by $m$-qubit register idx.} 
        & 
        \makecell{\begin{tikzpicture}
            \begin{yquant}
                qubit {idx} l;
                qubit {} j;
                ["north east:$m$" 
                {font=\protect\footnotesize, inner sep=0pt}]
                slash l;
                ["north east:$n$" 
                {font=\protect\footnotesize, inner sep=0pt}]
                slash j;
                hspace {5pt} -;
                [multictrl]
                box {$U$} j ~ l;
                hspace {5pt} -;
            \end{yquant}
        \end{tikzpicture}} \\ 
        \hline
        \makecell{Multiplexed $X$ gate \\ ($O$-control $X$ gate)} 
        & \makecell{$\sum\limits_{l\in \{O\}}\ket{l}\bra{l}_{\rm idx} \bigotimes_{k=1}^{n}[X]_{l,k}  + \sum\limits_{l'\notin \{O\}}\ket{l'}\bra{l'}_{\rm idx}\otimes I_{2^n}$,\\
        where $[X]_{l,k}$ is a Pauli-$X$ or identity operator \\ acting on the $k$th qubit.} & 
        \makecell{\begin{tikzpicture}
            \begin{yquant}
                qubit {idx} l;
                qubit {} 0;
                ["north east:$m$" 
                {font=\protect\footnotesize, inner sep=0pt}]
                slash l;
                ["north east:$n$" 
                {font=\protect\footnotesize, inner sep=0pt}]
                slash 0;
                hspace {5pt} -;
                [plusctrl, shape=yquant-circle]
                box {$O$} (l) | 0;
                hspace {5pt} -;
            \end{yquant}
        \end{tikzpicture}} \\
        \hline \cline{1-3}
    \end{tabular}
    \caption{Notations, mathematical forms, and circuits of basic controlled quantum gates.}
    \label{notation of quantum circuits}
\end{table}

In diagrams of quantum circuits, we follow standard convention in Table \ref{notation of quantum circuits}, which introduces some basic controlled  quantum gates used in this paper. The controlled gate is an important quantum gate that uses one or more qubits to control operation on other qubits. A horizontal line represents a single qubit, and a register consisting of multiple qubits is represented by adding a short slash at the beginning of a horizontal line. The rectangular box represents a single- or multi-qubit gate, and the circle represents the control set. For simplicity, if a Pauli-$X$ gate is controlled by one qubit, it is called a controlled $X$ (CNOT) gate.

\section{Block Encoding}
\label{section: block encoding}
In this section, we introduce the dictionary data structure for block encoding of sparse structured matrices that unifies multiple approaches. Using this data structure, we present a dictionary-based block encoding scheme of sparse matrices with low \textit{subnormalization}. We further extend this framework to its Hermitian counterpart for non-negative symmetric matrices and analyze the \textit{circuit depth} required for implementation.

\subsection{Dictionary Data Structure}
\label{subsection: Dictionary Data Structure}

A dictionary data structure is a collection of key-value pairs. It allows for the storage and retrieval of data based on a unique key, which is used to access the associated value. A sparse structured matrix is a type of matrix that combines two properties: sparsity (many zero elements) and structured non-zero elements (repetition and predictable arrangement). This type of matrix can be represented by a dictionary data structure, shown in Table \ref{dictionary}. The key is a non-negative integer. The associated value is a set consisting of triplets,
\begin{equation*}
    (\textit{element value}, \textit{row index}, \textit{column index}).
\end{equation*}
For triplets in the $l$-th key-value pair, they share the same value $A_l$, while the row and column indices $(i,j)$ belong to the coupled index set $(S_r(l), S_c(l))$. For different key-value pairs, the element values of triplets can be the same. 

\begin{table}[htbp]
    \centering
    \begin{tabular}{|c|c|}
        \hline
        \textbf{Keys} & \textbf{Values} \\
        \hline
        0 & $\left\{(a_{ij},i,j) : a_{ij} = A_0, (i,j)\in\left( S_r(0),S_c(0)\right)\right\}$ \\
        \hline
        1 & $\left\{(a_{ij},i,j) : a_{ij} = A_1, (i,j)\in \left(S_r(1), S_c(1)\right)\right\}$ \\
        \hline
        2 & $\left\{(a_{ij},i,j) : a_{ij} = A_2, (i,j)\in \left(S_r(2), S_c(2)\right)\right\}$ \\
        \hline
        \vdots & \vdots \\
        \hline
    \end{tabular}
    \caption{The dictionary data structure of a sparse structured matrix.}
    \label{dictionary}
\end{table}

The dictionary data structure of a sparse structured matrix is not unique. In the dictionary data structure framework, we define the following terminology.
\begin{itemize}
    \item A \textit{data item} refers to a key-value pair;
    \item The key $l$ is called the \textit{data index} and the associated value $\left\{(a_{ij},i,j) : a_{ij} = A_l, i\in S_r(l), j\in S_c(l)\right\}$ is called the \textit{data set};
    \item The non-zero value $A_l$ is called the \textit{data value};
    \item The index set $S_r(l)$ and $S_c(l)$ are called the \textit{row index set} and \textit{column index set}, respectively. They can be characterized by functions, which are collectively referred to as \textit{data functions}.
\end{itemize}

Our dictionary framework provides a unified approach to block encoding of sparse matrices, encompassing several existing protocols ~\cite{gilyen2019quantum,parker2024sfable,camps2024explicit,sunderhauf2024block} (see Appendix~\ref{subsection: several sparse block encodings within dictionary} for detailed comparisons). Inspired by~\cite{camps2024explicit}, in this work, we define the following dictionary data structure, which is shown in Table~\ref{table: dictionary data structure in this paper}. The row and column index sets can be characterized by the following data function:
\begin{itemize}
    \item[(1)] The row indices $i= c_l(j)$, where $c_l(j)$ is an injective data function that maps column indices $j$ to the corresponding row indices $i$ (unlike the bijective function $c(l,j)$ used in \cite{camps2024explicit}), $l\in[0,s_0-1]$;
    \item[(2)] The column indices $j\in S_c(l)$, where $S_c(l)$ denotes the set that includes all column indices of the $l$-th data value. 
\end{itemize}

\begin{table}[htbp]
    \centering
    \begin{tabular}{|c|c|}
        \hline
        \textbf{Keys} & \textbf{Values} \\
        \hline
        0 & $\left\{(a_{ij},i,j) : a_{ij} = A_0, (i,j)\in\left(c_{0}(j),S_{c}(0)\right) \right\}$ \\
        \hline
        1 & $\left\{(a_{ij},i,j) : a_{ij} = A_1, (i,j)\in\left(c_{1}(j),S_{c}(1)\right) \right\}$ \\
        \hline
        2 & $\left\{(a_{ij},i,j) : a_{ij} = A_2, (i,j)\in\left(c_{2}(j),S_{c}(2)\right) \right\}$ \\
        \hline
        \vdots & \vdots \\
        \hline
    \end{tabular}
    \caption{Dictionary data structure in this article.}
    \label{table: dictionary data structure in this paper}
\end{table}

In contrast to the block encoding protocol in~\cite{camps2024explicit} that mandates the presence of every data value across all columns, our block encoding based on the dictionary data structure in Table~\ref{table: dictionary data structure in this paper} eliminates such column-wise distribution constraints, thereby enabling more flexible handling of a wider variety of sparse matrices. 

Concrete examples of matrix representations using our dictionary data structure are presented in Section~\ref{section applications}, demonstrating its application to various sparse matrix types.

\subsection{Dictionary-based Block Encoding of Sparse Matrices}

In this section, we introduce the block encoding of sparse matrices and its Hermitian version based on the dictionary data structure in Tables \ref{table: dictionary data structure in this paper} and \ref{table: symmetric dictionary data structure in this paper}, respectively. The connection between the dictionary-based block encoding of sparse matrices and LCU is explored. 

The design rationale behind this encoding strategy is inspired by \cite{camps2024explicit} and \cite{sunderhauf2024block}. It is an improvement of the block encoding scheme in \cite{camps2024explicit} by removing the limitation that every data value should appear in all columns and using the state preparation technique proposed in \cite{sunderhauf2024block} to reduce its \textit{subnormalization}.

\subsubsection{Dictionary-based Block Encoding of Sparse Matrices and LCU}
Based on the dictionary data structure defined in Table~\ref{table: dictionary data structure in this paper}, our block encoding implementation consists of two quantum operations:
\begin{itemize}
    \item State Preparation: The oracles $\text{PREP}$ and $\text{UNPREP}$ perform the amplitude encoding of the data values $\{A_l\}_{l=0}^{s_0-1}$ stored in the dictionary;
    \item Index Mapping: The oracle $O_{\rm c}$ implements the injective function $c_l: j \mapsto c_l(j)$ for $l\in[0,s_0-1]$, $j\in S_c(l)$.
\end{itemize}
Leveraging these oracles, we formally present the dictionary-based block encoding scheme of sparse matrices in Theorem~\ref{theorem: sparse block encoding}.
\begin{figure}[htbp]
    \centering
    \begin{tikzpicture}
        \begin{yquant}
            qubit {idx} l;
            qubit {del} del;
            qubit {$\ket{j}$} j;
            ["north:$n$" 
            {font=\protect\footnotesize, inner sep=0pt}]
            slash j;
            ["north:$m$" 
            {font=\protect\footnotesize, inner sep=0pt}]
            slash l;
            hspace {5pt} -;
            box {PREP} l;
            hspace {5pt} -;
            [multictrl]
            box {$O_{c}$} (j,del) ~ l ;
            hspace {5pt} -;
            box {UNPREP} l;
            hspace {5pt} -;
            output {$\ket{i}$} j;
        \end{yquant}
    \end{tikzpicture}
    \caption{Basic framework of block encoding of sparse matrices with the dictionary data structure in Table~\ref{table: dictionary data structure in this paper}.}
    \label{circuit: sparse block encoding}
\end{figure}
\begin{theorem}\label{theorem: sparse block encoding}
    Let $A\in \mathbb{C}^{2^{n} \times 2^{n}}$ be a matrix that can be represented by a dictionary data structure with $s_{0}$ data items as stated in Table~\ref{table: dictionary data structure in this paper}, and $m=\lceil\log_{2}s_0\rceil$. If there exists a column index oracle $O_{c}$ such that 
    \begin{equation}\label{oracle Oc}
        O_{c}\ket{l}_{\rm idx} \ket{0}_{\rm del}\ket{j} = \begin{cases}
            \ket{l}_{\rm idx} \ket{0}_{\rm del} \ket{c_{l}(j)}, &\mbox{ if } l \in \left[0,s_{0}-1\right] \mbox{ and } j\in S_{c}\left(l\right),\\
            \ket{l}_{\rm idx} \ket{1}_{\rm del} \ket{j}, &\mbox{ if } l\in\left[s_{0},2^m-1\right] \mbox{ or } j\notin S_{c}\left(l\right),
        \end{cases}
    \end{equation}
    and two state preparation oracles $\rm PREP,UNPREP$ such that 
    \begin{equation}\label{oracle: PREP}
        {\rm PREP}\ket{0}_{\rm idx}^{\otimes m} = \frac{1}{\sqrt{\sum_{l=0}^{s_{0}-1}\left|A_{l}\right|}}\left(\sum_{l=0}^{s_{0}-1}\sqrt{A_{l}}\ket{l}_{\rm idx} + \sum_{l=s_{0}}^{2^m-1}0\ket{l}_{\rm idx}\right),
    \end{equation}
    \begin{equation}\label{oracle: UNPREP}
        {\rm UNPREP}^{\dagger}\ket{0}_{\rm idx}^{\otimes m} = \frac{1}{\sqrt{\sum_{l=0}^{s_{0}-1}\left|A_{l}\right|}}\left(\sum_{l=0}^{s_{0}-1}\sqrt{A_{l}}^{*}\ket{l}_{\rm idx} + \sum_{l=s_{0}}^{2^m-1}0\ket{l}_{\rm idx}\right),
    \end{equation}
    where $\sqrt{A_{l}}^{*}$ denotes the complex conjugate of $\sqrt{A_{l}}$, then the unitary
    \begin{equation*}
        U_{A} = \left({\rm UNPREP} \otimes I_{2^{n+1}} \right) O_{c} \left({\rm PREP} \otimes I_{2^{n+1}} \right),
    \end{equation*}
    as shown in Figure \ref{circuit: sparse block encoding}, can block encode $A$ with the subnormalization $\alpha = \sum_{l=0}^{s_{0}-1}\left|A_{l}\right|$.
\end{theorem}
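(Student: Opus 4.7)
The plan is to verify directly that $(\bra{0}^{\otimes m}_{\rm idx}\bra{0}_{\rm del}\otimes I_{2^n})\, U_{A}\, (\ket{0}^{\otimes m}_{\rm idx}\ket{0}_{\rm del}\otimes I_{2^n}) = A/\alpha$ by tracing the evolution of the input basis state $\ket{0}_{\rm idx}^{\otimes m}\ket{0}_{\rm del}\ket{j}$ through the three circuit layers of Figure~\ref{circuit: sparse block encoding} and then computing the overlap with $\ket{0}_{\rm idx}^{\otimes m}\ket{0}_{\rm del}\ket{i}$ at the end. This reduces the theorem to verifying that the resulting amplitude equals $a_{ij}/\alpha$ for every $i,j\in[0,2^{n}-1]$.

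First I would apply $(\mathrm{PREP}\otimes I_{2^{n+1}})$: by Equation~\eqref{oracle: PREP}, the idx register carries $\frac{1}{\sqrt{\alpha}}\sum_{l=0}^{s_{0}-1}\sqrt{A_{l}}\ket{l}_{\rm idx}$, while the del and system registers are untouched. Then I would apply $O_{c}$ and split the sum according to the two branches of Equation~\eqref{oracle Oc}: for indices $l$ with $j\in S_c(l)$ the del qubit stays at $\ket{0}$ and $\ket{j}$ is rewritten as $\ket{c_l(j)}$, while for indices $l$ with $j\notin S_c(l)$ the del qubit flips to $\ket{1}$ and $\ket{j}$ is kept. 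Indices $l\in[s_0,2^m-1]$ do not contribute because their amplitudes from PREP vanish.

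Next I would compute the action of UNPREP from the left. Taking the adjoint of Equation~\eqref{oracle: UNPREP} and using $(\sqrt{A_{l}}^{*})^{*}=\sqrt{A_{l}}$, one obtains $\bra{0}_{\rm idx}^{\otimes m}\,\mathrm{UNPREP} = \frac{1}{\sqrt{\alpha}}\sum_{l=0}^{s_0-1}\sqrt{A_{l}}\bra{l}_{\rm idx}$. Projecting the post-$O_c$ state onto $\bra{0}_{\rm idx}^{\otimes m}\bra{0}_{\rm del}$ eliminates the branch whose del qubit is in $\ket{1}$ (this is precisely the role of the deletion ancilla) and leaves only the diagonal $l$-terms from the idx sum. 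Using the convention $\sqrt{c}=\sqrt{|c|}e^{\imath\theta/2}$ so that $\sqrt{A_l}\cdot\sqrt{A_l} = A_l$, the residual amplitude on the system register is $\frac{1}{\alpha}\sum_{l:\,j\in S_c(l)} A_{l}\ket{c_{l}(j)}$, and overlapping with $\bra{i}$ gives
\begin{equation*}
\frac{1}{\alpha}\sum_{\substack{l:\,j\in S_c(l)\\ c_{l}(j)=i}} A_{l}.
\end{equation*}

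Finally I would invoke the semantics of the dictionary of Table~\ref{table: dictionary data structure in this paper}: the non-zero entries of $A$ are partitioned over the $s_{0}$ data items, and within the $l$-th data item the injective function $c_{l}$ recovers the row index from the column index. Hence, if $a_{ij}\neq 0$, exactly one index $l$ satisfies both $j\in S_c(l)$ and $c_l(j)=i$ and contributes $A_l=a_{ij}$; if $a_{ij}=0$, the sum is empty. In both cases the amplitude equals $a_{ij}/\alpha$, proving that $U_A$ block encodes $A$ with subnormalization $\alpha=\sum_{l=0}^{s_0-1}|A_l|$. The main pitfall is not mathematical depth but careful bookkeeping: one must use the complex square root convention consistently so that the conjugated $\sqrt{A_l}^{*}$ in UNPREP combines with the $\sqrt{A_l}$ in PREP to reproduce the signed/phased value $A_l$ rather than $|A_l|$, and one must remember that the del qubit is what sends the out-of-support branches to zero under the projection.
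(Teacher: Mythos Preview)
Your proposal is correct and follows essentially the same direct-verification approach as the paper: compute the matrix element $\bra{0}_{\rm idx}^{\otimes m}\bra{0}_{\rm del}\bra{i}\,U_A\,\ket{0}_{\rm idx}^{\otimes m}\ket{0}_{\rm del}\ket{j}$ by expanding PREP/UNPREP and applying $O_c$, then identify the surviving term with $a_{ij}/\alpha$ via the dictionary semantics. The only stylistic difference is that the paper carries the calculation with indicator functions $\delta_{l,[0,s_0-1]},\delta_{j,S_c(l)},\delta_{i,c_l(j)}$ in a single chain of equalities, whereas you trace the circuit layer by layer and make the partition assumption (exactly one $l$ contributes) explicit.
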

\begin{proof}
     To recover the matrix from the block encoding, the index register and the del register must be initialized and postselected as $\ket{0}$. The values $\frac{1}{\sum_{l=0}^{s_0-1}\vert A_l\vert}a_{ij}$ are then recovered when initializing the bottom register with $\ket{j}$ and postselecting/measuring an $\ket{i}$ as
    \begin{equation*}
    \begin{aligned}
        & \bra{0}_{\rm idx}^{\otimes m}\bra{0}_{\rm del}\bra{i} \left({\rm UNPREP} \otimes I_{2^{n+1}} \right) O_{c} \left({\rm PREP} \otimes I_{2^{n+1}} \right) \ket{0}_{\rm idx}^{\otimes m}\ket{0}_{\rm del} \ket{j}\\
        =& \frac{1}{\sqrt{\sum_{l'=0}^{s_{0}-1}\left|A_{l'}\right|\sum_{l=0}^{s_{0}-1}\left|A_{l}\right|}}\sum_{l',l=0}^{s_{0}-1} \bra{l'}_{\rm idx}\bra{0}_{\rm del}\bra{i} \sqrt{A_{l'}A_{l}} O_{c} \ket{l}_{\rm idx}\ket{0}_{\rm del}\ket{j} \\
        =& \frac{1}{\sqrt{\sum_{l'=0}^{s_{0}-1}\left|A_{l'}\right|\sum_{l=0}^{s_{0}-1}\left|A_{l}\right|}}\sum_{l',l=0}^{s_{0}-1} \bra{l'}_{\rm idx}\bra{i} \sqrt{A_{l'}A_{l}} \delta_{l,[0,s_{0}-1]}\delta_{j,S_{c}(l)}\ket{l}_{\rm idx}\ket{c_l(j)} \\
        =& \frac{1}{\sqrt{\sum_{l'=0}^{s_{0}-1}\left|A_{l'}\right|\sum_{l=0}^{s_{0}-1}\left|A_{l}\right|}} \sum_{l',l=0}^{s_{0}-1} \sqrt{A_{l'}A_{l}} \delta_{l,[0,s_{0}-1]}\delta_{j,S_{c}(l)} \delta_{l',l}\delta_{i,c_{l}(j)} \\
        =& \frac{1}{\sum_{l=0}^{s_{0}-1}\left|A_{l}\right|} a_{ij},
    \end{aligned}
    \end{equation*}
    where $a_{ij}$ refers to the $l$-th data value in the dictionary as $\left\{(a_{ij},i,j):a_{ij}=A_l,(i,j) \in \left(c_l(j),S_c(l)\right)\right\}$, and $\delta_{i,S}$ is the indicate function as 
    \begin{equation*}
        \delta_{i,S} = 
        \begin{cases}
            1, &\text{if }  i \in S \text{ or } i=S,\\
            0, &\text{otherwise.}
        \end{cases}
    \end{equation*}
\end{proof}

\begin{remark}
    We will present a low-depth circuit implementation of the aforementioned block encoding of sparse matrices in Section \ref{section: low time metric implementation}, achieving the \textit{subnormalization} as $\sum_l\vert A_l\vert$. Notably, while the theoretical lower bound for \textit{subnormalization} is given by the spectral norm $\Vert A\Vert_{\rm sp}$, the construction of a corresponding low-depth block encoding protocol remains an open challenge in this field.
\end{remark}

Another general approach for encoding sparse Hamiltonians is LCU~\cite{LongGui-Lu_2006,childs2012hamiltonian}, which constructs quantum circuits in the form,
\begin{equation}
    U = \sum_{l=0}^{L-1} c_l U_l,
    \label{eq:LCU}
\end{equation}
where $c_l > 0$ are positive coefficients satisfying $\sum_{l=0}^{L-1} c_l = 1$, $U_l$ are $n$-qubit unitary operators implementable with polynomial-size circuits, $L = \mathcal{O}(\mathrm{poly}(n))$ scales polynomially with system size $n$. For practical implementation with low \textit{circuit depth}, it is commonly assumed that the unitaries $U_l$ are self-inverse operators ($U_l^2 = I$)~\cite{babbush2018encoding}.
\begin{figure}[htbp]
    \centering
    \begin{tikzpicture}
        \begin{yquant}
            qubit {idx} l;
            qubit {} j;
            ["north:$n$" 
            {font=\protect\footnotesize, inner sep=0pt}]
            slash j;
            ["north:$\lceil\log {L}\rceil$" 
            {font=\protect\footnotesize, inner sep=0pt}]
            slash l;
            hspace {5pt} -;
            box {PREP} l;
            hspace {5pt} -;
            [multictrl] box {$\text{SELECT}$} j ~ l;
            hspace {5pt} -;
            box {UNPREP} l;
            hspace {5pt} -;
        \end{yquant}
    \end{tikzpicture}
    \caption{Quantum circuit of LCU~\cite{babbush2018encoding}.}
    \label{circuit: LCU}
\end{figure}

The quantum circuit for LCU is constructed in Figure~\ref{circuit: LCU}. Its oracles PREP and UNPREP are used to prepare quantum states, that is,
\begin{equation*}
    \text{PREP}\ket{0}^{\otimes \lceil\log {L}\rceil}=\text{UNPREP}^\dagger\ket{0}^{\otimes \lceil\log {L}\rceil}=\frac{1}{\sqrt{\sum_{l} c_l }}\sum_{l}\sqrt{c_l}\ket{l},
\end{equation*}
and the $\text{SELECT}$ is the Hamiltonian selection oracle~\cite{babbush2018encoding} defined as
$$
\begin{aligned}
    \text{SELECT} &= \sum_l \ket{l}\bra{l}\otimes U_l, \\
\end{aligned}
$$
where $U_l$ are $n$-qubit unitaries.

The following corollary establishes the connection between the dictionary-based block encoding of sparse matrices and LCU. 
\begin{corollary}\label{corollary: LCU}
Let $A\in \mathbb{C}^{2^{n} \times 2^{n}}$ be a matrix that can be represented by a dictionary data structure with $s_{0}$ data items as stated in Table~\ref{table: dictionary data structure in this paper}, and $m=\lceil\log_{2}s_0\rceil$. Then the dictionary-based block encoding of $A$ represents a linear combination of $(n+1)$-qubit unitaries.
\end{corollary}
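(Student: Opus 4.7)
The plan is to exhibit the dictionary-based block encoding in the standard PREP/SELECT/UNPREP form of LCU, then read off the linear combination by postselecting the idx register on $\ket{0}^{\otimes m}$. The key observation is that the column index oracle $O_{c}$ acts as the identity on the idx register, so it is automatically a multiplexor of the form
\begin{equation*}
    O_{c} \;=\; \sum_{l=0}^{2^{m}-1} \ket{l}\bra{l}_{\rm idx} \otimes V_{l},
\end{equation*}
where each $V_{l}$ acts on the $(n+1)$-qubit register consisting of the del qubit and the $n$-qubit system register. For $l\in[0,s_{0}-1]$, the action of $V_{l}$ on $\ket{0}_{\rm del}\ket{j}$ is prescribed by Equation~\eqref{oracle Oc}; since $c_{l}$ is injective, this partial map is norm-preserving, so it extends (non-uniquely) to a unitary on the whole $(n+1)$-qubit space. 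For $l\in[s_{0},2^{m}-1]$, the same completion argument applies. This identifies $O_{c}$ with the SELECT oracle from Figure~\ref{circuit: LCU}.

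Next, I would compute the effective operator on the del and system registers after postselecting the idx register on $\ket{0}^{\otimes m}$. Substituting the multiplexor expansion of $O_{c}$ into the definition of $U_{A}$ and using the PREP/UNPREP identities \eqref{oracle: PREP}--\eqref{oracle: UNPREP}, one obtains
\begin{equation*}
    \bigl(\bra{0}^{\otimes m}_{\rm idx}\otimes I_{2^{n+1}}\bigr)\, U_{A}\, \bigl(\ket{0}^{\otimes m}_{\rm idx}\otimes I_{2^{n+1}}\bigr)
    \;=\; \sum_{l=0}^{s_{0}-1} \bra{0}{\rm UNPREP}\ket{l}\bra{l}{\rm PREP}\ket{0}\, V_{l}
    \;=\; \frac{1}{\alpha}\sum_{l=0}^{s_{0}-1} A_{l}\, V_{l},
\end{equation*}
with $\alpha=\sum_{l=0}^{s_{0}-1}|A_{l}|$, after using $\overline{\sqrt{A_{l}}^{*}}\sqrt{A_{l}}=A_{l}$.

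To match the LCU template of Equation~\eqref{eq:LCU}, which requires positive weights summing to one, I would absorb the phase of each data value into the corresponding unitary: writing $A_{l}=|A_{l}|e^{\imath\theta_{l}}$ and setting $U_{l}:=e^{\imath\theta_{l}}V_{l}$ (still an $(n+1)$-qubit unitary) together with $c_{l}:=|A_{l}|/\alpha$, the postselected operator becomes $\sum_{l=0}^{s_{0}-1} c_{l}\, U_{l}$ with $c_{l}>0$ and $\sum_{l} c_{l}=1$. This is exactly the LCU form with $L=s_{0}$.

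I expect the only subtle point to be the unitary completion of the $V_{l}$ on the $\ket{1}_{\rm del}$ subspace, since Equation~\eqref{oracle Oc} only specifies the action on $\ket{0}_{\rm del}$. It suffices to observe that the image of the del$=0$ subspace under the prescribed partial map is a norm-preserving set (by injectivity of $c_{l}$ and the disjointness built into the two cases), and any unitary completion of the orthogonal complement yields a valid $V_{l}$; different completions only affect off-block entries of $U_{A}$ and leave the LCU identity above unchanged.
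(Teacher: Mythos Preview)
Your argument is correct and follows the same route as the paper: both recognize that $O_{c}$ is block-diagonal in the idx register, hence a SELECT multiplexor $\sum_{l}\ket{l}\bra{l}\otimes V_{l}$, and then match $({\rm UNPREP}\otimes I)\,O_{c}\,({\rm PREP}\otimes I)$ to the PREP/SELECT/UNPREP template of LCU. The only difference is cosmetic: the paper writes down an explicit XOR-based tensor-product formula for the $(n+1)$-qubit unitaries $U_{l}^{(X)}$, whereas you invoke an abstract unitary completion of the partial map prescribed by Equation~\eqref{oracle Oc}; your explicit computation of the postselected operator $\tfrac{1}{\alpha}\sum_{l}A_{l}V_{l}$ and the phase absorption $U_{l}=e^{\imath\theta_{l}}V_{l}$ to obtain positive normalized weights are details the paper leaves implicit.
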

\begin{proof}
    The dictionary-based block encoding of sparse matrices consists of three oracles: $O_{c}$, $\text{PREP}$, and $\text{UNPREP}$.

    \begin{itemize}
        \item The oracle $O_{c}$ in Equation~\eqref{oracle Oc} can also be expressed as 
        \begin{equation*}
            \begin{aligned}
            O_{\rm c} = \sum_{l=0}^{2^m-1}\ket{l}\bra{l}_{\rm idx} \otimes U_{l}^{(X)},
            \end{aligned}
        \label{eq_Org_Oc}
        \end{equation*}
        where $U_{l}^{(X)}$ are $(n+1)$-qubit unitaries as
        $$
        U_{l}^{(X)}= \bigotimes_{k=0}^{n-1} \left(\left[\text{XOR}\left([c_l(j)]_k,[j]_k\right)\right]X + \overline{\left[\text{XOR}\left([c_l(j)]_k,[j]_k\right)\right]} I \right),
        $$
        satisfying 
        \begin{equation*}
        \begin{aligned}
            U_{l}^{(X)} \ket{0}_{\rm del}\ket{j} &= \begin{cases}
                \ket{0}_{\rm del} \ket{c_{l}(j)}, &\mbox{if } l\in\left[0,s_{0}-1\right] \mbox{ and }j\in S_{c}(l),\\
                \ket{1}_{\rm del}\ket{j} , &\mbox{if } l\in\left[s_{0},2^m-1\right] \mbox{ or }j\notin S_{c}(l),\\
            \end{cases}
        \end{aligned}
        \end{equation*}
        XOR denotes the Exclusive Or operation of two $1$-bit binary operation. Therefore, the oracle $O_{\rm c}$ forms an $(n+1)$-qubit Hamiltonian simulation selection operator.

        \item The oracles $\text{PREP}$ and $\text{UNPREP}$ in Equation \eqref{oracle: PREP} and \eqref{oracle: UNPREP} encode the quantum states $\frac{1}{\sqrt{\sum_{l} \left|A_l\right|}}\sum_{l}\sqrt{A_l}\ket{l}$ and $\frac{1}{\sqrt{\sum_{l} \left|A_l\right|}}\sum_{l}\sqrt{A_l}^{*}\ket{l}$, respectively.
    \end{itemize}
    
    These oracles in the dictionary-based block encoding of sparse matrices shown in Figure \ref{circuit: Ul} correspond to the oracles with the same notations of LCU as shown in Figure~\ref{circuit: LCU}. Above all, the dictionary-based block encoding of $A\in\mathbb{C}^{2^n\times 2^n}$ with $s_0$ data items is a linear combination of $(n+1)$-qubit unitaries.

    \begin{figure}[htbp]
    \centering
    \begin{tikzpicture}
        \begin{yquant}
            qubit {idx} l;
            qubit {del} del;
            qubit {$\ket{j}$} j;
            
            ["north east:$n$" {font=\protect\footnotesize, inner sep=0pt}]
            slash j;
            ["north east:$m$" {font=\protect\footnotesize, inner sep=0pt}]
            slash l;
            hspace {5pt} -;
            box {PREP} l;
            hspace {5pt} -;
            [this subcircuit box style={dashed, "SELECT" below}]
            subcircuit {
                qubit {} l;
                qubit {} j;
                qubit {} del;
                [plusctrl, shape=yquant-circle]
                box {$O_{\rm c}$} l | j,del;
            } (l,del,j);
            hspace {5pt} -;
            box {UNPREP} l;
            hspace {5pt} -;
            output {$\ket{i}$} j;
        \end{yquant}
    \end{tikzpicture}
    \caption{Dictionary-based block encoding of $n$-qubit sparse matrices is a case of linear combination of $(n+1)$-qubit unitaries, where $O_c = \sum_l\ket{l}\bra{l}\otimes U_{l}^{(X)}$ forms a Hamiltonian selection oracle.}
    \label{circuit: Ul}
    \end{figure}
\end{proof}

\subsubsection{Dictionary-based Hermitian Block Encoding of Non-negative Symmetric Matrices}
Given that the data functions $c_l(j)$ are injective in $j$ for all $l \in [0, s_0 - 1]$, it is logical that we can construct the corresponding functions $c_j(l)$ for $j \in [0, 2^n - 1]$, which are injective in $l$. These functions map the data indices $l \in [0, s_0 - 1]$ to the row indices, establishing the dictionary data structure presented in Table~\ref{table: symmetric dictionary data structure in this paper}. 

\begin{table}[h!]
    \centering
    \begin{tabular}{|c|c|}
        \hline
        \textbf{Keys} & \textbf{Values} \\
        \hline
        0 & $\left\{(a_{ij},i,j) : a_{ij} = A_0, (i,j)\in\left(c_{j}(0),S_{c}(0)\right) \right\}$ \\
        \hline
        1 & $\left\{(a_{ij},i,j) : a_{ij} = A_1, (i,j)\in\left(c_{j}(1),S_{c}(1)\right) \right\}$ \\
        \hline
        2 & $\left\{(a_{ij},i,j) : a_{ij} = A_2, (i,j)\in\left(c_{j}(2),S_{c}(2)\right) \right\}$ \\
        \hline
        \vdots & \vdots \\
        \hline
    \end{tabular}
    \caption{Dictionary data structure for the Hermitian block encoding of non-negative symmetric matrices, where the data function $i=c_j(l)$ is different from the data function $i=c_l(j)$ stated in Table~\ref{table: dictionary data structure in this paper}.}
    \label{table: symmetric dictionary data structure in this paper}
\end{table}

Therefore, we propose the following theorem for the dictionary-based Hermitian block encoding.
\begin{theorem}\label{theorem: sparse Hermitian block encoding}
    Let $A\in \mathbb{R}^{2^n\times 2^n}$ be a non-negative symmetric matrix that can be represented by a dictionary data structure with $s_{0}$ data items as stated in Table~\ref{table: symmetric dictionary data structure in this paper}, and $m = \lceil\log_{2}s_0\rceil$, where $s_0 \leq 2^{n}$. If there exists a column index oracle $O_{c}$ such that 
    \begin{equation}\label{Hermitian oracle Oc}
        O_{c}\ket{0}^{\otimes (n-m)}_{\rm idx}\ket{l}_{\rm idx}\ket{0}_{\rm del} \ket{j} = \begin{cases}
            \ket{c_{j}(l)}_{\rm idx}\ket{0}_{\rm del} \ket{j}, &\mbox{ if } j\in S_{c}\left(l\right) \mbox{ and } l\in \left[0,s_{0}-1\right],\\
            \ket{0}^{\otimes (n-m)}_{\rm idx}\ket{l}_{\rm idx}\ket{1}_{\rm del} \ket{j}, &\mbox{ if } j\notin S_{c}\left(l\right) \mbox{ or } l\in \left[s_{0},2^m-1\right],
        \end{cases}
    \end{equation}
    and a state preparation oracle $\rm PREP$ such that
    \begin{equation*}
        {\rm PREP}\ket{0}^{\otimes m}_{\rm idx} = \frac{1}{\sqrt{\sum_{l=0}^{s_{0}-1}A_{l}}}\left(\sum_{l=0}^{s_{0}-1}\sqrt{A_{l}}\ket{l}_{\rm idx} + \sum_{l=s_{0}}^{2^m-1}0\ket{l}_{\rm idx}\right),
    \end{equation*}
    then $U_{A}$ represented by the circuit shown in Figure \ref{circuit: sparse Hermitian block encoding} is a Hermitian block encoding of $A$ with the \textit{subnormalization} $\sum_{l=0}^{s_{0}-1}A_{l}$. 
    \begin{figure}[htbp]
        \centering
        \begin{tikzpicture}
            \begin{yquant}
                qubit {} l1;
                qubit {} l0;
                qubit {del1} del1;
                qubit {del0} del0;
                qubit {$\ket{j}$} j;
                init {idx} (l0, l1);

                ["north:$n-m$" {font=\protect\footnotesize, inner sep=0pt}]
                slash l1;
                ["north east:$m$" {font=\protect\footnotesize, inner sep=0pt}]
                slash l0;
                ["north east:$n$" {font=\protect\footnotesize, inner sep=0pt}]
                slash j;
                hspace {5pt} -;
                box {PREP} l0;
                hspace {5pt} -;
                [multictrl]
                box {$O_{c}$} (l1,l0,del1) ~ j;
                hspace {5pt} -;
                swap (del1,del0);
                box {SWAP} (l1,l0,j);
                hspace {5pt} -;
                [multictrl]
                box {$O_{c}^{\dagger}$} (l1,l0,del1) ~ j;
                hspace {5pt} -;
                box {${\rm PREP}^{\dagger}$} l0;
                hspace {5pt} -;
                output {$\ket{i}$} j;
            \end{yquant}
        \end{tikzpicture}
        \caption{Basic framework of Hermitian block encoding of non-negative symmetric matrices with the dictionary data structure in Table~\ref{table: symmetric dictionary data structure in this paper}. A wavy line connecting two boxes indicates that the operator acts on the registers associated with those two boxes.}
        \label{circuit: sparse Hermitian block encoding}
    \end{figure}
\end{theorem}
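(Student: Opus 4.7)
The plan is to verify two things: that $U_A$ block-encodes $A/\sum_{l}A_l$ in its top-left block, and that $U_A = U_A^\dagger$ so that the encoding is Hermitian. The Hermiticity is the easier half: writing $U_A = \text{PREP}^\dagger \, O_c^\dagger \, S_{\rm del}\, S_{\rm idx,bot}\, O_c \, \text{PREP}$, where the two SWAPs act on disjoint registers and are each self-inverse, one immediately obtains $U_A^\dagger = U_A$ since the adjoint reverses the order of the two commuting SWAPs but leaves everything else invariant. The bulk of the argument therefore lies in tracking the amplitude $\bra{0}^{\otimes (n-m)}_{l1}\bra{0}^{\otimes m}_{l0}\bra{0}_{\rm del1}\bra{0}_{\rm del0}\bra{i}_{\rm bot}\, U_A\, \ket{0}^{\otimes (n-m)}_{l1}\ket{0}^{\otimes m}_{l0}\ket{0}_{\rm del1}\ket{0}_{\rm del0}\ket{j}_{\rm bot}$ and showing it equals $a_{ij}/\sum_l A_l$.

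I would propagate the initial state through the circuit step by step. After PREP, the idx register holds $\frac{1}{\sqrt{\sum_l A_l}}\sum_{l=0}^{s_0-1}\sqrt{A_l}\,\ket{0}^{\otimes (n-m)}_{l1}\ket{l}_{l0}$. Applying $O_c$ splits each term according to Equation~\eqref{Hermitian oracle Oc} into a \emph{good} branch with ${\rm del1}=0$ and idx rewritten as $\ket{c_j(l)}$ (when $j\in S_c(l)$), and a \emph{bad} branch with ${\rm del1}=1$ and idx unchanged. The two SWAPs then exchange del1 with del0 and the combined idx register with bot; crucially, this transfers the bad-branch flag from del1 into del0, so any bad contribution carries ${\rm del0}=1$ for the rest of the circuit and is annihilated by the final projection $\bra{0}_{\rm del0}$.

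The key step is applying $O_c^\dagger$ to the good branch, which now has the form $\ket{j}_{\rm idx}\ket{0}_{\rm del1}\ket{c_j(l)}_{\rm bot}$. Here I would invoke the symmetry hypothesis: since $A$ is symmetric and the dictionary in Table~\ref{table: symmetric dictionary data structure in this paper} consequently closes each data set under transposition of row and column indices, one has $c_j(l)\in S_c(l)$ and $c_{c_j(l)}(l) = j$ whenever $j\in S_c(l)$. Substituting these into Equation~\eqref{Hermitian oracle Oc} gives $O_c\,\ket{0}^{\otimes (n-m)}\ket{l}_{\rm idx}\ket{0}_{\rm del1}\ket{c_j(l)}_{\rm bot} = \ket{j}_{\rm idx}\ket{0}_{\rm del1}\ket{c_j(l)}_{\rm bot}$, so $O_c^\dagger$ cleanly resets the idx register to $\ket{0}^{\otimes (n-m)}\ket{l}$ without producing additional bad-flag amplitude. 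Finally, applying $\text{PREP}^\dagger$ to $l0$ contributes the factor $\sqrt{A_l}^{*}/\sqrt{\sum_{l'}A_{l'}}$ upon projecting onto $\ket{0}^{\otimes m}$, so each $l$ with $i=c_j(l)$ and $j\in S_c(l)$ contributes $A_l/\sum_{l'}A_{l'}$ (using $A_l\ge 0$). Since $c_j$ is injective in $l$, the sum over $l$ collapses to the single term $a_{ij}/\sum_{l'}A_{l'}$ when $(i,j)$ is a nonzero entry and vanishes otherwise, matching the required amplitude.

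The step I expect to require the most care is justifying the symmetry identity $c_{c_j(l)}(l)=j$, which is implicit in calling Table~\ref{table: symmetric dictionary data structure in this paper} the symmetric counterpart of Table~\ref{table: dictionary data structure in this paper}: for each triple $(A_l,i,j)$ in the data set of index $l$, the transposed triple $(A_l,j,i)$ must lie in the same data set because $a_{ji}=a_{ij}$. Making this compatibility explicit, and using it to extend $O_c$ unambiguously to a unitary on the $\ket{j}_{\rm idx}\ket{0}_{\rm del1}\ket{c_j(l)}_{\rm bot}$ sector, is the real content of the proof; the amplitude bookkeeping is then routine.
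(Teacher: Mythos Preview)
Your Hermiticity argument and the forward propagation through $\mathrm{PREP}$, $O_c$, and the SWAPs are fine, and the paper does not even spell out the $U_A=U_A^\dagger$ step, so that part is a welcome addition. The genuine gap is the identity $c_{c_j(l)}(l)=j$. Your justification, that symmetry forces the transposed triple $(A_l,j,i)$ to lie in the \emph{same} data set of index $l$, does not follow from the hypotheses: Table~\ref{table: symmetric dictionary data structure in this paper} allows several data items to share the same value $A_l$, so the triple $(A_l,j,i)$ may sit in a different data item $l'\neq l$. In that case $O_c^\dagger\ket{j}_{\rm idx}\ket{0}_{\rm del1}\ket{c_j(l)}_{\rm bot}$ lands on $\ket{0}^{\otimes(n-m)}\ket{l'}_{\rm idx}\ket{0}_{\rm del1}\ket{c_j(l)}_{\rm bot}$, not on $\ket{0}^{\otimes(n-m)}\ket{l}$. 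Your argument can be repaired by observing that symmetry still forces $A_{l'}=a_{j,c_j(l)}=a_{c_j(l),j}=A_l$, so the final amplitude $\sqrt{A_l A_{l'}}/\sum A_k$ is unchanged --- but this is not what you wrote, and the involution claim as stated is false.

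The paper sidesteps this entirely by computing the matrix element symmetrically rather than forward-propagating. It expands both $\mathrm{PREP}$ on the ket and $\mathrm{PREP}^\dagger$ on the bra as sums over $l$ and $l'$, then moves $O_c^\dagger$ onto the bra (i.e., evaluates $\bra{0,l'}\bra{0}_{\rm del1}\bra{i}\,O_c^\dagger$ by computing $O_c\ket{0,l'}\ket{0}_{\rm del1}\ket{i}$). After the SWAP, the inner product factors as
\[
\sum_{l,l'}\sqrt{A_{l'}A_l}\,\bigl(\delta_{i,S_c(l')}\delta_{j,c_i(l')}\bigr)\bigl(\delta_{j,S_c(l)}\delta_{i,c_j(l)}\bigr)=\sqrt{a_{ji}}\,\sqrt{a_{ij}}=a_{ij},
\]
using only the non-negativity and symmetry of $A$ at the last step. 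No claim about which data item contains the transposed triple is needed; the two sums over $l$ and $l'$ collapse independently to $\sqrt{a_{ji}}$ and $\sqrt{a_{ij}}$. This is both shorter and avoids the pitfall you identified as the ``most care'' step.
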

\begin{proof}
Performing $U_A$ on $\ket{0}^{\otimes n}_{\rm idx}\ket{0}_{\rm del1}\ket{0}_{\rm del0}\ket{j}$ and postselecting/measuing it with $\bra{0}^{\otimes n}_{\rm idx}\bra{0}_{\rm del1}\bra{0}_{\rm del0}\bra{i}$, we obtain
\begin{equation*}
    \begin{aligned}
        &\bra{0}^{\otimes n}_{\rm idx}\bra{0}_{\rm del1}\bra{0}_{\rm del0}\bra{i}U_{A}\ket{0}^{\otimes n}_{\rm idx}\ket{0}_{\rm del1}\ket{0}_{\rm del0}\ket{j} \\
        =& \frac{1}{\sum_{l=0}^{s_0-1}A_l} \sum_{l,l'=0}^{2^m-1} \bra{0}^{\otimes (n-m)}_{\rm idx}\bra{l'}_{\rm idx}\bra{0}_{\rm del1}\bra{0}_{\rm del0}\bra{i} \sqrt{A_{l'}A_l} O_c^{\dagger} {\rm SWAP} O_c \ket{0}^{\otimes (n-m)}_{\rm idx}\ket{l}_{\rm idx}\ket{0}_{\rm del1}\ket{0}_{\rm del0}\ket{j} \\
        =& \frac{1}{\sum_{l=0}^{s_0-1}A_l} \sum_{l,l'=0}^{2^m-1} \delta_{i,S_c(l')}\delta_{l',[0,s_0-1]}\bra{c_{i}(l')}_{\rm idx}\bra{i} \sqrt{A_{l'}A_l} \delta_{j,S_c(l)}\delta_{l,[0,s_0-1]}\ket{j}_{\rm idx}\ket{c_j(l)}\\
        =& \frac{1}{\sum_{l=0}^{s_{0}-1}A_{l}} \sum_{l',l=0}^{2^m-1} \sqrt{A_{l'}A_{l}} (\delta_{i,S_c(l')}\delta_{l',[0,s_0-1]}  \delta_{j,c_{i}(l')}) (\delta_{j,S_c(l)}\delta_{l,[0,s_0-1]} \delta_{i, c_{j}(l)}) \\
        =& \frac{1}{\sum_{l=0}^{s_{0}-1}A_{l}} \sqrt{a_{ji} a_{ij}} \\
        =& \frac{1}{\sum_{l=0}^{s_{0}-1}A_{l}}a_{ij},
    \end{aligned}
\end{equation*}
where $a_{ij}$ refers to the $l$-th data value in the dictionary as $\left\{(a_{ij},i,j) : a_{ij} = A_l, (i,j)\in\left(c_{j}(l),S_{c}(l)\right) \right\}$.
\end{proof}

Note that the oracle $O_{c}$ in Equation \eqref{Hermitian oracle Oc} is different from that in Equation \eqref{oracle Oc}. The former is controlled by the register ${\rm idx}$ and evolves the register $\ket{j}$, whereas the latter is controlled by the register $\ket{j}$ and evolves the register ${\rm idx}$. These two oracles $O_{c}$ are both reasonable because the functions $c_{l}(j)$ and $c_{j}(l)$ are separately injective in both $j$ and $l$.

\section{Low Time Metric Implementation}\label{section: low time metric implementation}

The \textit{time metric} of a block-encoding is defined in Equation~\eqref{time metric} as
\begin{equation*}
    \textit{time metric} = \textit{circuit depth} \times \textit{subnormalization}.
\end{equation*}
There is a trade-off between the \textit{circuit depth}, ancillary qubits, and \textit{subnormalization} in the dictionary-based block encoding of sparse matrices. To simplify the statement of implementation and proof, we make the following assumptions on the sparse matrix $A\in\mathbb{C}^{2^n\times 2^n}$,
\begin{assumption}
    \item \label{assumption: 1} Suppose that the encoding dictionary (as specified in Table~\ref{table: dictionary data structure in this paper}) has $s_0$ data items, where $\lceil \log  s_0 \rceil = \mathcal{O}(n)$;
    \item \label{assumption: 2} Suppose that the matrix $A$ has a total of $s$ non-zero elements.
\end{assumption}

\subsection{Low Circuit Depth Implementations of Oracles}
In this section, we analyze the oracle implementation of our block encoding using the $\left\{\mbox{U}(2),\mbox{CNOT}\right\}$ gate set. The construction depends on efficient implementations of several key oracles, including the column oracle $O_{c}$ and the state preparation oracles $\text{PREP}$ and $\text{UNPERP}$.

\subsubsection{Implementation of $O_{c}$}

The column oracle $O_{c}$ can be efficiently implemented by adapting the low-depth circuit design of the SAIM~\cite{zhang2024circuit}, an advanced framework for representing general sparse matrices, and the LCU~\cite{berry2012black,PhysRevX.13.041041}. This implementation leverages the sparse Boolean memory architecture~\cite{zhang2024circuit,zhang2022quantum} to query the positions of non-zero elements, which serves as the foundation for the low-depth circuit realization.

\begin{definition}[Sparse Boolean Memory
(SBM)~\cite{zhang2022quantum}]
    For an $n$-index, $\tilde{n}$-word Boolean function $f:\{0,1\}^{n}\to\{0,1\}^{\tilde{n}}$, let $\mathcal{S}_f=\{k:f(k)\neq 0\cdots0\}$ be a set that contains all input indices with non-zero output. We say that $f$ is $s$-sparse if $\mathcal{S}_f$ has no more than $s$ elements. Its corresponding sparse Boolean function selector satisfies $\text{select}(f)\ket{k}\ket{z}=\ket{k}\ket{z\oplus f(x)}$, where $\oplus$ represents bitwise XOR. Let $\left[f(k)\right]_l$ be the $l$-th digit of $f(k)$, $\text{select}(f)$ can also be expressed as
    \begin{equation}
        \mbox{select}(f)\equiv \sum_{k=0}^{2^n-1}\ket{k}\bra{k}\bigotimes_{l=0}^{\tilde{n}-1}\left([f(k)]_lX + \overline{[f(k)]}_lI\right).
    \label{equation: select(f)}
    \end{equation}
\end{definition}

\begin{lemma}[Circuit Depth of SBM~\cite{zhang2022quantum}]\label{SBM}
    Given an arbitrary $n$-index, $\tilde{n}$-word, $s$-sparse Boolean function $f$, $\text{select}(f)$ in Equation \eqref{equation: select(f)} can be realized with \textit{circuit depth} $\mathcal{O}\left(\log(ns\tilde{n})\right)$ and $\mathcal{O}\left(ns\tilde{n}\right)$ ancillary qubits using only single- and two-qubit gates.
\end{lemma}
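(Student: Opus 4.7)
The plan is to exploit sparsity so that only $s$ nontrivial branches of the sum defining $\text{select}(f)$ need to be realized, and then to execute all $s$ branches in parallel using ancillary registers so that the depth grows only logarithmically in $n$, $s$, and $\tilde{n}$. Since $f(k)=0\cdots 0$ for every $k\notin \mathcal{S}_f$, the corresponding terms in the decomposition contribute the identity and require no gates; so the construction only needs to handle the $s$ tuples $(k^{(0)},f(k^{(0)})),\dots,(k^{(s-1)},f(k^{(s-1)}))$, each of which classically encodes which output bits must be flipped when the input register equals $k^{(r)}$.

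First I would create $s$ copies of the $n$-qubit input register using a balanced CNOT fanout tree (copying a classical basis state is unitary and standard). Into each copy $r\in\{0,\dots,s-1\}$ I would hardwire a comparator against the classical constant $k^{(r)}$: apply Pauli-$X$ on the copy wherever $k^{(r)}_j=0$ so that a match becomes the all-ones string, then compute the AND of all $n$ bits into a fresh flag qubit $a_r$ using a balanced AND-tree (Toffoli tree) of depth $\mathcal{O}(\log n)$, realized with single- and two-qubit gates plus ancillae via the standard Toffoli decomposition. After this stage, exactly one $a_r$ (or none) equals $1$. Next, for each $r$ I would fan out $a_r$ to those output wires $l$ with $[f(k^{(r)})]_l=1$ using another balanced tree (depth $\mathcal{O}(\log \tilde n)$), and apply a CNOT from each fanned-out copy onto the corresponding target qubit of the word register, all in parallel. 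Finally, I would uncompute the fanouts, the AND-trees, and the comparator $X$-layers in reverse, returning the ancillae to $\ket{0}$.

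For the resource count, the input-fanout tree uses $\mathcal{O}(ns)$ CNOTs at depth $\mathcal{O}(\log(ns))$ and $\mathcal{O}(ns)$ ancillae; the $s$ comparator AND-trees run in parallel, each of depth $\mathcal{O}(\log n)$ and using $\mathcal{O}(n)$ ancillae, for totals $\mathcal{O}(\log n)$ depth and $\mathcal{O}(ns)$ ancillae; the flag-fanout plus CNOTs onto the word register runs in depth $\mathcal{O}(\log \tilde n)$ with $\mathcal{O}(s\tilde n)$ ancillae. Summing and then doubling to account for uncomputation gives overall depth $\mathcal{O}(\log n + \log s + \log \tilde n)=\mathcal{O}(\log(ns\tilde n))$ and ancilla count $\mathcal{O}(ns+s\tilde n)=\mathcal{O}(ns\tilde n)$, as claimed.

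The main obstacle I expect is the parallelism bookkeeping: one must argue that the three stages (input fanout, parallel comparison, flag fanout and bit flips) genuinely compose into a unitary that matches $\text{select}(f)$ on every basis state, including inputs $k\notin \mathcal{S}_f$ where every $a_r$ stays $0$ and the word register is left untouched; and one must verify that the uncomputation correctly disentangles all ancillary registers without recreating hidden phases. A secondary subtlety is the Toffoli-decomposition overhead: each node of an AND-tree must be expressed in the $\{\text{U}(2),\text{CNOT}\}$ set using $\mathcal{O}(1)$ gates per Toffoli (with one clean ancilla), so that the tree depth remains logarithmic; this is standard but needs to be invoked explicitly to stay within the stated gate set.
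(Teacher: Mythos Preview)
The paper does not prove this lemma; it is quoted verbatim from~\cite{zhang2022quantum} and used as a black box. Your proposal therefore cannot be compared to a proof in the paper, but it can be assessed on its own merits, and it is essentially the right construction (and indeed the one underlying the cited reference): fan out the index register to $s$ parallel copies, run $s$ equality tests against the classical constants $k^{(r)}$ via $X$-masks and balanced AND-trees to produce flag bits $a_r$, use the flags to drive the $\tilde{n}$ target flips, and uncompute.

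There is one genuine (though easily repaired) gap in your Step~3. You write that after fanning out each flag $a_r$ you ``apply a CNOT from each fanned-out copy onto the corresponding target qubit of the word register, all in parallel.'' This is not possible as stated: for a fixed output bit $l$ there may be up to $s$ indices $r$ with $[f(k^{(r)})]_l=1$, and each of those fanned-out copies wants to CNOT onto the \emph{same} target qubit $z_l$. CNOTs sharing a target cannot be parallelized. The fix is to first compute, for each $l$ separately, the parity $p_l=\bigoplus_{r:[f(k^{(r)})]_l=1} a_{r}$ among the ancillae using a balanced CNOT tree of depth $\mathcal{O}(\log s)$, and then apply a single CNOT $p_l\to z_l$. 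Since at most one $a_r$ equals $1$ on any computational-basis input, $p_l$ correctly equals the bit you want to flip. This adds an $\mathcal{O}(\log s)$ term to the depth of Step~3 and at most $\mathcal{O}(s\tilde{n})$ extra ancillae, neither of which changes your final totals $\mathcal{O}(\log(ns\tilde{n}))$ depth and $\mathcal{O}(ns\tilde{n})$ ancillae.
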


Based on SBM, the column oracle $O_{\rm c}$ in Equation~\eqref{oracle Oc} can then be implemented in the following five stages.  

\begin{itemize}
    \item Stage 1. Perform a Pauli-$X$ on the del register.
    \begin{equation}\label{eq_O_c0}
        \ket{0}_{\rm del} \xrightarrow{X} \ket{1}_{\rm del}.
    \end{equation}
    
    \item Stage 2. Perform a ranging SBM $O_{c_1}$.

    We introduce a $(\lceil \log s_0\rceil+n)$-index, $1$-word, $s_{\rm c_1}$-sparse Boolean function $f_{c_1}(l,j):\{0,1\}^{\lceil \log s_0\rceil+n} \mapsto \{0,1\}$ defined as
    \begin{equation*}
        f_{c_1}(l,j) = \left\{\begin{aligned}
        &1, && \mbox{if } l\in\left[0,s_{0}-1\right] \mbox{ and }j\in S_{c}(l), \\
        &0, && \mbox{if } l\in\left[s_{0},2^{\lceil \log s_0\rceil}-1\right] \mbox{ or }j\notin S_{c}(l).
        \end{aligned}\right.
    \end{equation*}
     The sparsity of $f_{\rm c_1}$ is given by
    \begin{equation*}
        \begin{aligned}
            s_{\rm c_1} &= \left|\left\{(l,j): f_{\rm c_1 }(l,j)\neq 0\right\}\right|  = \left|\left\{(l,j):l\in[0,s_{0}-1]\mbox{ and }j\in S_{c}(l)\right\}\right| = s,
        \end{aligned}
    \end{equation*}
    where the last equality holds since the number of non-zero element's indices $(l,j)$ is equal to the count of non-zero elements in the matrix. Therefore, there exists a SBM $O_{\rm c_1}$ such that
    \begin{equation}\label{eq_O_c1}
        O_{\rm c_1} = \mbox{select}(f_{\rm c_1})= \sum_{l,j=0}^{2^{\lceil \log s_0\rceil}-1,2^n-1}\ket{l}\bra{l}_{\rm idx} \otimes \left(f_{\rm c_1}(l,j)X + \overline{f_{\rm c_1}(l,j)} I\right)_{\rm del} \otimes \ket{j}\bra{j}.
    \end{equation}

    \item Stage 3. Perform a mapping SBM $O_{\rm c_2}$.

    We introduce a $(\lceil \log s_0\rceil+n+1)$-index, $n$-word, $s_{\rm c_2}$-sparse Boolean function $f_{\rm c_2}(l,j):\{0,1\}^{\lceil \log s_0\rceil+n+1}\mapsto \{0,1\}^{n}$ as
    \begin{equation*}
        f_{\rm c_2}(l,j,\text{del}) = \left\{\begin{aligned}
        & c_l(j), &&\text{if } \text{del} = 0,\\
        & 0, &&\text{if } \text{del} = 1,\\
        \end{aligned}\right.
    \end{equation*}
    where the sparsity of $f_{\rm c_2}$ is given by $s_{\rm c_2} = \left|\left\{(l,j):f_{\rm c_2}(l,j)\neq 0\right\}\right| = s_{c_1} = s$. Furthermore, there exists a SBM $O_{\rm c_2}$ such that 
    \begin{equation}\label{eq_O_c2}
        \begin{aligned}
        O_{\rm c_2} =& \mbox{select}(f_{\rm c_2}) \\
        =& \sum_{l=0,j=0}^{2^{\lceil \log s_0\rceil}-1,2^n-1}\ket{l}\bra{l}_{\rm idx}\otimes \ket{0}\bra{0}_{\rm del}\otimes \ket{j}\bra{j}\otimes  \bigotimes_{k=0}^{n-1}\left([f_{\rm {c_2}}(l,j, 0)]_{k}X + \overline{[f_{\rm {c_2}}(l, j, 0)]}_{k}I\right) \\
        &+ \sum_{l=0,j=0}^{2^{\lceil \log s_0\rceil}-1,2^n-1}\ket{l}\bra{l}_{\rm idx}\otimes \ket{1}\bra{1}_{\rm del}\otimes \ket{j}\bra{j}\otimes  \bigotimes_{k=0}^{n-1}\left([f_{\rm {c_2}}(l,j, 1)]_{k}X + \overline{[f_{\rm {c_2}}(l, j, 1)]}_{k}I\right).
        \end{aligned}        
    \end{equation}
    
    \item Stage 4. Perform an uncompute SBM $O_{\rm c_3}$.

    Due to the dictionary requirement of Table \ref{table: dictionary data structure in this paper}, $c_l(j)$ is an injective function. Given $l$ and $j$, there exists a $(\lceil \log s_0\rceil+n+1)$-index, $n$-word, $s_{\rm c_3}$-sparse Boolean function $f_{\rm c_3}$ such that
    \begin{equation*}
        f_{\rm c_3}(l,c_l(j),\text{del}) = \begin{cases}
            j, &\text{if del}=0,\\
            0, &\text{if del}=1,
        \end{cases}
    \end{equation*}
    where the sparsity of $f_{\rm c_3}$ is given by $s_{\rm c_3} = \left|\left\{(l,j):f_{\rm c_3}(l,j,\text{del})\neq 0\right\}\right| = s_{\rm c_1} = s$. Similarly, there exists a SBM $O_{\rm c_3}$ such that 
    \begin{equation}\label{eq_O_c3}
        \begin{aligned}
            O_{\rm c_3}
            =& \mbox{select}(\hat{U}_{c_3}) \\
            =& \sum_{l=0,j=0}^{2^{\lceil \log (s_0)\rceil}-1,2^n-1}\ket{l}\bra{l}_{\rm idx}\otimes \ket{0}\bra{0}_{\rm del} \otimes \left[\bigotimes_{k=0}^{n-1}\left([f_{\rm c_3}(l,c_l(j),0)]_{k}X + \overline{[f_{\rm c_3}(l,c_l(j),0)]}_{k}I\right)\right]\\
            & \qquad\qquad\qquad\qquad\quad\quad\qquad \otimes \ket{c_l(j)}\bra{c_l(j)}, \\
            & + \sum_{l=0,j=0}^{2^{\lceil \log (s_0)\rceil}-1,2^n-1}\ket{l}\bra{l}_{\rm idx}\otimes \ket{1}\bra{1}_{\rm del} \otimes \left[\bigotimes_{k=0}^{n-1}\left([f_{\rm c_3}(l,0,1)]_{k}X + \overline{[f_{\rm c_3}(l,0,1)]}_{k}I\right)\right]\\
            & \qquad\qquad\qquad\qquad\quad\quad\qquad \otimes \ket{0}\bra{0}^{\otimes n}. \\
        \end{aligned}
    \end{equation}
    
    \item Stage 5. Perform a $0$-controlled $2n$-qubit SWAP gate $O_{\rm SWAP}$ gate.
    \begin{equation}\label{eq O_bit}
        \ket{0}_{\rm del}\ket{0}^{\otimes n}\ket{c_l(j)} 
        \xrightarrow{O_{\rm SWAP}} \ket{0}_{\rm del}\ket{c_l(j)} \ket{0}^{\otimes n},\quad 
        \ket{1}_{\rm del}\ket{j}\ket{0}^{\otimes n} 
        \xrightarrow{O_{\rm SWAP}} \ket{1}_{\rm del} \ket{j} \ket{0}^{\otimes n} .
    \end{equation}
\end{itemize}
Above all, the oracle $O_{\rm c}$ can be represented by
\begin{equation*}
    O_{\rm c} = O_{\rm SWAP}O_{\rm c_3}O_{\rm c_2}O_{\rm c_1}X.
    \label{eq_SBM_Oc}
\end{equation*}
The implementation of oracle $O_{c}$ leads to the following process,
$$
\begin{aligned}
    \ket{l}_{\rm idx}\ket{0}_{\rm del}\ket{j} \ket{0}^{\otimes n} &\xrightarrow{X} \ket{l}_{\rm idx}\ket{1}_{\rm del}\ket{j} \ket{0}^{\otimes n} \xrightarrow{O_{\rm c_1}} \ket{l}_{\rm idx}\ket{1\oplus f_{\rm c_1}(l,j)}_{\rm del}\ket{j}\ket{0}^{\otimes n} ,
\end{aligned}
$$
if $l\in[0,s_0-1]$ and $j\in S_{c}(l)$, we have
\begin{equation*}
    \begin{aligned} 
        & \ket{l}_{\rm idx}\ket{1\oplus f_{\rm c_1}(l,j)}_{\rm del}\ket{j}\ket{0}^{\otimes n} = \ket{l}_{\rm idx}\ket{0}_{\rm del}\ket{j} \ket{0}^{\otimes n} \\
        \xrightarrow{O_{\rm c_2}} & \ket{l}_{\rm idx}\ket{0}_{\rm del}\ket{j}\ket{0 \oplus f_{c_2}(l,j,0) } = \ket{l}_{\rm idx}\ket{0}_{\rm del}\ket{j}\ket{c_l(j)} \\
        \xrightarrow{O_{\rm c_3}} & \ket{l}_{\rm idx}\ket{0}_{\rm del}\ket{j\oplus f_{c_3}\left(l,c_l(j),0\right)} \ket{c_l(j)} = \ket{l}_{\rm idx}\ket{0}_{\rm del}\ket{j\oplus j}\ket{c_l(j)} \\
        \xrightarrow{O_{\rm SWAP}} & \ket{l}_{\rm idx}\ket{0}_{\rm del}\ket{c_l(j)}\ket{0}^{\otimes n};
    \end{aligned} 
\end{equation*}
if $l\in[s_0,2^m-1]$ or $j\notin S_c(l)$, we have
\begin{equation*}
    \begin{aligned} 
        & \ket{l}_{\rm idx}\ket{1\oplus f_{\rm c_1}(l,j)}_{\rm del}\ket{j}\ket{0}^{\otimes n} = \ket{l}_{\rm idx}\ket{1}_{\rm del}\ket{j} \ket{0}^{\otimes n} \\
        \xrightarrow{O_{\rm c_2}} & \ket{l}_{\rm idx}\ket{1}_{\rm del}\ket{j}\ket{0 \oplus 0 } = \ket{l}_{\rm idx}\ket{1}_{\rm del}\ket{j}\ket{0}^{\otimes n} \\
        \xrightarrow{O_{\rm c_3}} & \ket{l}_{\rm idx}\ket{1}_{\rm del}\ket{j\oplus 0} \ket{0}^{\otimes n} = \ket{l}_{\rm idx}\ket{1}_{\rm del}\ket{j}\ket{0}^{\otimes n} \\
        \xrightarrow{O_{\rm SWAP}} & \ket{l}_{\rm idx}\ket{1}_{\rm del}\ket{j}\ket{0}^{\otimes n}.
    \end{aligned} 
\end{equation*}
That is, 
\begin{equation*}
    \ket{l}_{\rm idx}\ket{0}_{\rm del}\ket{j}\ket{0}^{\otimes n} \xrightarrow{O_{\rm c}} \begin{cases}
        \ket{l}_{\rm idx}\ket{0}_{\rm del}\ket{c_l(j)}\ket{0}^{\otimes n}, &\text{ if }l\in[0,s_0-1]\text{ and }j\in S_{c}(l),\\
        \ket{l}_{\rm idx}\ket{1}_{\rm del}\ket{j}\ket{0}^{\otimes n}, &\text{ if }l\in[s_0,2^m-1]\text{ or }j\notin S_{c}(l).\\
    \end{cases}
\end{equation*}

\subsubsection{Implementation of $\text{PREP}$ and $\text{UNPREP}$}
There are two oracles $\text{PREP}$ and $\text{UNPREP}$ for state preparation. The task of quantum state preparation is to prepare an $n$-qubit quantum state $\ket{\psi}$ from an initial product state $\ket{0}^{\otimes n}$ using single- and two-qubit gates. A general quantum state can be expressed as
\begin{equation*}
    \ket{\psi} = \sum_{k=0}^{N-1} \psi_k\ket{k},
\end{equation*}
where $N = 2^n$, $\psi_k \in \mathbb{C}$, $\sum_{k=0}^{N-1} |\psi_k|^2 = 1$.

\begin{lemma}[State Preparation~\cite{zhang2022quantum}]\label{State Preparation}
    With only single- and two-qubit gates, an arbitrary $n$-qubit quantum state can be deterministically prepared with circuit depth $\mathcal{O}\left(n\right)$ and $\mathcal{O}\left(2^n\right)$ ancillary qubits.
\end{lemma}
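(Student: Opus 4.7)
The plan is to exploit the exponential ancilla budget to parallelize the standard recursive state-preparation tree. Recall that an arbitrary state $\ket{\psi}=\sum_{k=0}^{2^{n}-1}\psi_{k}\ket{k}$ admits the recursive decomposition
$$
\ket{\psi}=\cos\theta_{0}\ket{0}\ket{\psi_{0}}+e^{\imath\varphi}\sin\theta_{0}\ket{1}\ket{\psi_{1}},
$$
where $\ket{\psi_{0}},\ket{\psi_{1}}$ are $(n-1)$-qubit states whose amplitudes are the corresponding marginals of $\{\psi_{k}\}$. Iterating this decomposition produces a binary tree of depth $n$, but the naive implementation applies the $2^{k}$ multiplexed rotations at level $k$ sequentially, giving total depth $\mathcal{O}(2^{n})$. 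My first step would be to make this tree explicit and argue that the circuit-depth bottleneck is the serial application of sibling multiplexed rotations, not the number $\mathcal{O}(2^{n})$ of elementary gates.

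Second, I would introduce $\mathcal{O}(2^{n})$ ancillary qubits organized as a binary fan-out tree and observe that an $n$-qubit fan-out $\ket{x}\ket{0}^{\otimes m}\mapsto\ket{x}\ket{x}^{\otimes m}$ can be implemented in depth $\mathcal{O}(\log m)$ using a balanced cascade of CNOTs (this is the textbook parallel copy). With these copies in hand, the $2^{k}$ controlled rotations at level $k$ of the tree can be dispatched to $2^{k}$ disjoint ancilla workspaces and executed in a single parallel layer, since the rotation angles depend only on classically known amplitudes of $\ket{\psi}$ and each rotation touches a distinct ancilla block. Summing across all $n$ levels gives a total depth of $\mathcal{O}(n)$ for the preparation phase, at the cost of $\mathcal{O}(2^{n})$ ancillas holding intermediate copies.

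Third, I would uncompute the ancillary fan-outs and gather the prepared amplitudes onto the target $n$-qubit register by reversing the fan-out cascade. Because fan-out is self-inverse up to CNOT direction, its uncomputation can again be scheduled in depth $\mathcal{O}(\log 2^{n})=\mathcal{O}(n)$, and the conditional structure ensures that the ancillas factor out deterministically, leaving the target register in $\ket{\psi}$ without any postselection. Composing the three phases yields a deterministic preparation of depth $\mathcal{O}(n)$ using $\mathcal{O}(2^{n})$ ancillas and only single- and two-qubit gates, proving the lemma.

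The main obstacle will be the disentangling step: a careless parallelization writes auxiliary data (the classical bits indexing each branch of the tree) into the ancilla workspaces, and one must verify that these auxiliary data can be uncomputed in $\mathcal{O}(n)$ depth without re-serializing the controlled rotations. I expect the cleanest way around this is to keep each rotation's control register local to its own ancilla block so that uncomputation of a level is simply the depth-$\mathcal{O}(\log(\mbox{block size}))$ reverse fan-out, and to verify by induction on $n$ that the total depth telescopes to $\mathcal{O}(n)$.
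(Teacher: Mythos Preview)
The paper does not prove this lemma; it is quoted from \cite{zhang2022quantum} as a black box and invoked only in the depth counts. So there is no in-paper argument to compare against, and your proposal must stand on its own.

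Your architecture---allocate $\mathcal{O}(2^{n})$ ancillas in a binary tree and fire the $2^{k}$ rotations at level $k$ in parallel on disjoint blocks---is the correct skeleton and matches the cited construction. The gap is your third step. After the parallel rotations, the $2^{k}$ rotated qubits are \emph{physically different} ancillas $t_{y}$, one per control pattern $y\in\{0,1\}^{k}$: the branch with control value $x$ has $t_{x}$ rotated and all other $t_{y}$ still at $\ket{0}$. Reversing the CNOT fan-out only uncopies the control bits; it does nothing to merge these $2^{k}$ candidate targets into a single data qubit, and the $t_{y}$'s remain entangled with the rest of the state. The obstacle you flag---uncomputing the control copies---is the easy part; the part you have not addressed is collapsing the many targets into one.

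The actual construction avoids per-level gathering altogether: it lets the state spread over the full tree so that after $n$ levels the $2^{n}$ leaves carry $\sum_{k}\psi_{k}\ket{e_{k}}$ in one-hot form, and then applies a single unary-to-binary compression (a depth-$\mathcal{O}(n)$ Clifford network) to obtain $\sum_{k}\psi_{k}\ket{k}\ket{0\cdots0}$. If instead you try to gather level by level with controlled swaps, routing on a $k$-bit address costs depth $\Theta(k)$ per level and the total becomes $\Theta(n^{2})$. Replace your ``reverse the fan-out cascade'' with the one-shot unary-to-binary step and the argument closes.
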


\begin{lemma}[Sparse State Preparation~\cite{zhang2022quantum}]\label{Sparse State Preparation}
    With only single- and two-qubit gates,  arbitrary $n$-qubit, $d$-sparse ($d\ge 2$) quantum states can be deterministically prepared with circuit depth $\Theta\left(\log(nd)\right)$ and $\mathcal{O}\left(nd\log d\right)$ ancillary qubits.
\end{lemma}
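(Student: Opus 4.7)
The plan is to decompose the preparation into three parallelizable stages built around an auxiliary $\lceil\log d\rceil$-qubit \emph{index} register that enumerates the $d$ non-zero amplitudes of $\ket{\psi}=\sum_{k\in S}\psi_k\ket{k}$ where $|S|=d$. First, I would prepare the index register in the weighted superposition $\sum_{l=0}^{d-1}\psi_{k_l}\ket{l}/\sqrt{\sum_l|\psi_{k_l}|^2}$ (where the phases of $\psi_{k_l}$ are folded into the amplitudes); this is a state on $\lceil\log d\rceil$ qubits, so by Lemma~\ref{State Preparation} applied to this small register it can be realized in depth $\mathcal{O}(\log d)$ using $\mathcal{O}(d)$ ancillas. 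Second, I would coherently imprint the non-zero positions by performing the map $\ket{l}_{\rm idx}\ket{0}^{\otimes n}\mapsto\ket{l}_{\rm idx}\ket{k_l}$; viewing $f:l\mapsto k_l$ as a $d$-sparse Boolean function with $\lceil\log d\rceil$ input indices and $n$ output words, Lemma~\ref{SBM} implements $\mathrm{select}(f)$ in depth $\mathcal{O}(\log(nd))$ with $\mathcal{O}(nd)$ ancillas. Third, since $l\mapsto k_l$ is a bijection between $[0,d-1]$ and $S$, its inverse is another $d$-sparse Boolean function from $n$ indices to $\lceil\log d\rceil$ words, so a second SBM uncomputes the index register in depth $\mathcal{O}(\log(nd))$ and leaves the main register in the desired state $\ket{\psi}$.

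Composing the three stages gives total depth $\mathcal{O}(\log d)+\mathcal{O}(\log(nd))+\mathcal{O}(\log(nd))=\mathcal{O}(\log(nd))$, and summing ancilla budgets across stages yields the claimed $\mathcal{O}(nd\log d)$. The matching lower bound $\Omega(\log(nd))$ comes from a light-cone argument: the output state depends on the $\mathcal{O}(nd)$ classical parameters specifying both the support $S$ and the amplitudes, so at least one output qubit must lie in the causal cone of $\Omega(\log(nd))$ input gates, giving the $\Theta(\log(nd))$ tight scaling.

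The main obstacle is the first stage: making the weighted superposition preparation truly logarithmic in $d$. A straightforward tree of multiplexed rotations has $\log d$ levels, and at the $j$-th level the $2^j$ multiplexed angles are, a priori, sequential because they all act on the same target qubit and depend on disjoint partial sums of $\{|\psi_{k_l}|^2\}$. Overcoming this requires a copy-and-uncompute gadget that fans out the index register into $\mathcal{O}(d)$ parallel replicas and precomputes all partial-sum angles into an ancillary workspace in log-depth; this replication is precisely where the extra factor $\log d$ in the ancilla count arises. Handling the phases of complex $\psi_k$ with the same parallelism (either by absorbing them into a diagonal select operation after the real-amplitude preparation or by interleaving them into the tree angles) is a secondary technicality but does not change the depth or ancilla scaling.
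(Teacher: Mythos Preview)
The paper does not prove this lemma; it is quoted verbatim from~\cite{zhang2022quantum} and used as a black box. Your three-stage decomposition---dense state preparation on a $\lceil\log d\rceil$-qubit index register, an SBM call to imprint the support $\{k_l\}$ onto the data register, and an inverse SBM to uncompute the index---is essentially the construction in the cited reference, and the overall depth and ancilla bounds come out correctly.

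Two minor corrections. First, your Stage~2 ancilla count is understated: Lemma~\ref{SBM} charges $\mathcal{O}((\text{index size})\cdot s\cdot\tilde n)$ ancillas, and here the index size is $\lceil\log d\rceil$, $s=d$, $\tilde n=n$, giving $\mathcal{O}(nd\log d)$ rather than $\mathcal{O}(nd)$. The same applies to Stage~3. This---not the fan-out gadget you describe for Stage~1---is where the extra $\log d$ factor in the total ancilla budget actually comes from.

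Second, your ``main obstacle'' paragraph is misplaced. Once you invoke Lemma~\ref{State Preparation} as a black box on the $\lceil\log d\rceil$-qubit register, Stage~1 is already settled at depth $\mathcal{O}(\log d)$ with $\mathcal{O}(d)$ ancillas; there is nothing left to overcome. The discussion of parallelizing multiplexed rotations via copy-and-uncompute belongs to the proof of Lemma~\ref{State Preparation} itself, which is a separate (also cited) result. As a proof of the present lemma \emph{given} the preceding lemmas, your Stages~1--3 together with the light-cone lower bound are sufficient.
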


\subsection{Proof of Circuit Depth}
The following theorem demonstrates the low \textit{circuit depth} for implementing our block encoding protocol of sparse matrices based on the dictionary data structure specified in Table~\ref{table: dictionary data structure in this paper}.

\begin{theorem}[Circuit depth of dictionary-based block encoding of sparse matrices]\label{theorem: circuit depth}
    Given a sparse matrix $A \in \mathbb{C}^{2^n \times 2^n}$, if Assumptions \eqref{assumption: 1} and \eqref{assumption: 2} hold, then there exists a dictionary-based block encoding of $A$, which can be implemented with circuit depth $\mathcal{O}(\log (ns))$ and $\mathcal{O}(n^2s)$ ancillary qubits, using only single- and two-qubit gates.
\end{theorem}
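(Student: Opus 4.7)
The plan is to decompose $U_A$ as in Figure~\ref{circuit: sparse block encoding} into three blocks --- $\mathrm{PREP}$, $O_c$, and $\mathrm{UNPREP}$ --- and bound the circuit depth and ancilla count of each, then compose. Throughout, I will use $m=\lceil\log_{2}s_{0}\rceil=\mathcal{O}(n)$ from Assumption~\ref{assumption: 1}, together with the trivial bound $s_{0}\le s$ (each data item corresponds to at least one non-zero entry of $A$).

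For $\mathrm{PREP}$ and $\mathrm{UNPREP}$: each is an $m$-qubit state preparation whose target state has exactly $s_0$ non-zero amplitudes, as prescribed by Equations~\eqref{oracle: PREP} and~\eqref{oracle: UNPREP}. I would simply invoke Lemma~\ref{Sparse State Preparation} (sparse state preparation), which produces such a state with depth $\Theta(\log(m s_{0}))$ and $\mathcal{O}(m s_{0}\log s_{0})$ ancillas. Substituting $m=\mathcal{O}(n)$ and $s_{0}\le s$ gives depth $\mathcal{O}(\log(ns))$ and ancilla count $\mathcal{O}(n^{2}s)$ for each of $\mathrm{PREP}$ and $\mathrm{UNPREP}$.

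For $O_{c}$: I would use the decomposition $O_{c}=O_{\mathrm{SWAP}}\,O_{c_{3}}\,O_{c_{2}}\,O_{c_{1}}\,X$ from Equations~\eqref{eq_O_c0}--\eqref{eq O_bit}. The initial $X$ is trivial. The three middle oracles are SBMs, so Lemma~\ref{SBM} applies directly:
\begin{itemize}
\item $O_{c_{1}}$: $(m+n)$-index, $1$-word, $s$-sparse; depth $\mathcal{O}(\log((m+n)\cdot s \cdot 1))=\mathcal{O}(\log(ns))$, ancillas $\mathcal{O}(ns)$.
\item $O_{c_{2}}$: $(m+n+1)$-index, $n$-word, $s$-sparse; depth $\mathcal{O}(\log((m+n+1)\cdot s\cdot n))=\mathcal{O}(\log(ns))$, ancillas $\mathcal{O}(n^{2}s)$.
\item $O_{c_{3}}$: same bounds as $O_{c_{2}}$ since the injectivity of $c_{l}(\cdot)$ guarantees the uncomputing function has the same sparsity.
\end{itemize}
Finally, $O_{\mathrm{SWAP}}$ is a single-qubit-controlled $2n$-qubit SWAP, i.e., $n$ controlled-SWAPs sharing a common control. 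Using $\mathcal{O}(n)$ ancillas to fan out the control in depth $\mathcal{O}(\log n)$, executing the $n$ Fredkin gates in parallel in depth $\mathcal{O}(1)$, and uncomputing the fan-out yields depth $\mathcal{O}(\log n)\subseteq \mathcal{O}(\log(ns))$. Summing the five stages gives $\mathrm{depth}(O_{c})=\mathcal{O}(\log(ns))$ and $\mathcal{O}(n^{2}s)$ ancillas.

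Composing the three blocks sequentially adds depths and takes the maximum ancilla count, yielding total depth $\mathcal{O}(\log(ns))+\mathcal{O}(\log(ns))+\mathcal{O}(\log(ns))=\mathcal{O}(\log(ns))$ and ancilla count $\mathcal{O}(n^{2}s)$, as claimed. The only subtlety I anticipate is the uncompute oracle $O_{c_{3}}$: one must verify that, because $c_{l}$ is injective, the Boolean function $f_{c_{3}}(l,c_{l}(j),0)=j$ is well-defined and $s$-sparse in the $(l,c_{l}(j))$ indexing; this is the step that truly relies on the dictionary structure of Table~\ref{table: dictionary data structure in this paper} rather than on a generic sparse-matrix assumption. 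Once this is established, Lemma~\ref{SBM} applies uniformly to the three SBMs, and the remaining arithmetic is routine.
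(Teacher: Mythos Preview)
Your proposal is correct and follows essentially the same approach as the paper: the same three-block decomposition of $U_A$, the same five-stage implementation of $O_c$ via Sparse Boolean Memory (Lemma~\ref{SBM}), and state preparation for $\mathrm{PREP}/\mathrm{UNPREP}$. The only minor differences are that the paper invokes the dense state-preparation lemma (Lemma~\ref{State Preparation}) rather than the sparse one and asserts $\mathcal{O}(1)$ depth for $O_{\mathrm{SWAP}}$, whereas your fan-out argument giving $\mathcal{O}(\log n)$ is actually more careful; neither difference affects the final bounds.
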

\begin{proof}
    The dictionary-based block encoding of sparse matrices in Theorem \ref{theorem: sparse block encoding} comprises three oracles: $O_{c}$, PREP, and UNPREP. 
    \begin{itemize}
    \item The oracle $O_{c}$ in Equation \eqref{oracle Oc} can be implemented by five stages, which consists of 
    \begin{itemize}
        \item a Pauli-$X$ gate in Equation~\eqref{eq_O_c0};
        \item $(\lceil \log  s_0\rceil+n)$-index, $1$-word, $s$-sparse ranging SBM $O_{\rm c_1}$ in Equation~\eqref{eq_O_c1};
        \item $(\lceil \log  s_0\rceil+n+1)$-index, $n$-word, $s$-sparse mapping SBM $O_{\rm c_2}$ in Equation~\eqref{eq_O_c2};
        \item $(\lceil \log  s_0\rceil+n+1)$-index, $n$-word, $s$-sparse decomputing SBM $O_{\rm c_3}$ in Equation~\eqref{eq_O_c3};
        \item Controlled $2n$-qubit SWAP gate $O_{\rm SWAP}$ in Equation~\eqref{eq O_bit}.
    \end{itemize}
    
    By Lemma~\ref{SBM}, the oracles $O_{\rm c_1}$, $O_{\rm c_2}$  and $O_{\rm c_3}$ can be realized with \textit{circuit depth}
    \begin{equation*}
        \mathcal{O}\left(\log \left((\lceil \log  s_0\rceil+n+1)ns\right)\right)
    \end{equation*}
    and 
    \begin{equation*}
        \mathcal{O}\left((\lceil \log  s_0\rceil+n+1)ns\right)
    \end{equation*}
    ancillary qubits. Besides, the Pauli-$X$ gate and the controlled $2n$-qubit SWAP gate can be implemented with \textit{circuit depth} $\mathcal{O}(1)$ without ancillary qubits. 
		
    \item The oracles PREP in Equation \eqref{oracle: PREP} and UNPREP in Equation \eqref{oracle: UNPREP} prepare two $\lceil\log_{2}s_0\rceil$-qubit quantum states. By Lemma \ref{State Preparation}, these two oracles can be realized with \textit{circuit depth} $\mathcal{O}\left(\lceil\log_{2}s_0\rceil\right)$ and $\mathcal{O}\left(2^{\lceil\log_{2}s_0\rceil}\right)$ ancillary qubits.
\end{itemize}	
    Therefore, the dictionary-based block encoding of sparse matrices can be implemented with \textit{circuit depth}  
    \begin{equation*}
        \begin{aligned}
            \mathcal{O}\left(\log \left((\lceil \log  s_0\rceil+n+1)ns\right)\right) + \mathcal{O}\left(\lceil\log_{2}s_0\rceil\right) = \mathcal{O}\left(\log_{2}\left(ns\right)\right)
        \end{aligned}
    \end{equation*}
    and 
    \begin{equation*}
        \begin{aligned}
        &  \mathcal{O}\left((\lceil \log  s_0\rceil+n+1)ns\right) + \mathcal{O}\left(2^{\lceil\log_{2}s_0\rceil}\right)  = \mathcal{O}\left(n^2s \right)
        \end{aligned}
    \end{equation*}
    ancillary qubits.  
\end{proof}

\begin{remark}
    The task of block-encoding a $2^n \times 2^n$ sparse matrix with $s$ nonzero elements is analogous to prepare a $2n$-qubit $s$-sparse quantum state under the framework of third quantization~\cite{Prosen_2008}. While the optimal circuit depth for measurement-free sparse state preparation using single- and two-qubit gates is known to be $\Theta(\log(ns))$~\cite{zhang2022quantum}, the corresponding lower bound for measurement-free sparse matrix block encoding remains an open problem in quantum computation.
\end{remark}

In Appendix~\ref{section: Circuit Depth Comparison}, we analyze the \textit{circuit depth} of existing block-encoding protocols. Our dictionary-based framework outperforms these state-of-the-art methods, particularly for matrices with repeated elements, achieving low \textit{subnormalization} (Theorem~\ref{theorem: sparse block encoding}) and low \textit{circuit depth} (Theorem~\ref{theorem: circuit depth}). These advancements and their associated trade-offs are quantitatively summarized in Table~\ref{table: time metric}.

\section{Applications}
\label{section applications}

In this section, we demonstrate the practical utility of our dictionary data structure through two key applications: (i) signless Laplacian matrices of graphs and (ii) discrete differential operators. While the low-depth circuit implementation proposed in this work currently requires substantial ancillary qubit resources—rendering full-scale quantum circuit simulations impractical—we provide open-source Python code (\href{https://github.com/ChunlinYangHEU/DIBLE}{https://github.com/ChunlinYangHEU/DIBLE}) to validate the \textit{subnormalization} properties of our dictionary-based block encoding protocol of sparse matrices.

\label{section: applications}
\subsection{Signless Laplacian Matrix of Graph}
A weighted directed graph $G$ consists of a vertex set $V=\left[0,n-1\right]$, a directed edge set $E=\left\{e=\left(i,j\right): i,j\in V\right\}$ and a weight function $W=\left\{w(i):i\in V\right\}\cup\left\{w(i,j):\left(i,j\right)\in E\right\}$, where $w(i)$ is the weight of vertex $i\in V$ and $w(i,j)$ is the weight of directed edge $(i,j)\in E$. It is associated with a signless Laplacian matrix $Q_{G}$ \cite{cvetkovic2009introduction}, where $\left(Q_{G}\right)_{ij}=w(i,j)(i\neq j)$ and $\left(Q_{G}\right)_{ii}=w(i)$. 

Consider a weighted directed cyclic graph in Figure \ref{weighted directed cyclic graph} with vertex weights $w(i) = \alpha_1$, and edge weights $w(i,{\rm mod}(i-1,8)) = \alpha_2$ (clockwise), $w(i,{\rm mod}(i+1,8)) = \alpha_3$ (counterclockwise), where $\alpha_1,\alpha_2,\alpha_3\in\mathbb{R}$ are non-zero and $i\in [0,7]$.
\begin{figure}[h!]
    \centering
    \begin{minipage}{0.45\textwidth}
        \centering
        \begin{tikzpicture}
            [
            vertex/.style={circle, draw, fill=white, inner sep=2pt},
            edge/.style={-Latex, draw}
            ]
            \foreach \i in {0,...,7} {
                \node[vertex] (\i) at ({360/8*(\i)}:2cm) {\i};
            }
            \foreach \i in {0,...,7} {
                \pgfmathtruncatemacro{\next}{mod(\i+1,8)}
                \draw[edge] (\i) -- (\next);
                \draw[edge] (\next) -- (\i);
            }
        \end{tikzpicture}
    \end{minipage}
    \begin{minipage}{0.45\textwidth}
        \centering
        \begin{equation*}
            Q_{G} = \begin{bmatrix}
                \alpha_1 & \alpha_3 &  &  &  &  &  & \alpha_2 \\
                \alpha_2 & \alpha_1 & \alpha_3 &  &  &  &  &  \\
                 & \alpha_2 & \alpha_1 & \alpha_3 &  &  &  &  \\
                 &  & \alpha_2 & \alpha_1 & \alpha_3 &  &  &  \\
                 &  &  & \alpha_2 & \alpha_1 & \alpha_3 &  &  \\
                 &  &  &  & \alpha_2 & \alpha_1 & \alpha_3 &  \\
                 &  &  &  &  & \alpha_2 & \alpha_1 & \alpha_3 \\
                \alpha_3 &  &  &  &  &  & \alpha_2 & \alpha_1 \\
            \end{bmatrix}
        \end{equation*}
    \end{minipage}
    \caption{A weighted directed cyclic graph $G$ and its signless Laplacian matrix $Q_{G}$.}
    \label{weighted directed cyclic graph}
\end{figure}

For the signless Laplacian matrix $Q_{G}$, its dictionary data structure is expressed in Table~\ref{table: weighted adjacency matrix}.
\begin{table}[htbp]
    \centering
    \begin{tabular}{|c|l|}
        \hline
        \textbf{Keys} & \multicolumn{1}{c|}{\textbf{Values}} \\
        \hline
        $0$ & $\left\{(a_{ij},i,j):a_{ij}=\alpha_1,i=j,j\in[0,7]\right\}$ \\
        \hline 
        $1$ & $\left\{(a_{ij},i,j):a_{ij}=\alpha_2,(i,j)\in\left({\rm mod}(j+1,8),[0,7]\right)\right\}$ \\
        \hline 
        $2$ & $\left\{(a_{ij},i,j):a_{ij}=\alpha_3,(i,j)\in\left({\rm mod}(j-1, 8),[0,7]\right)\right\}$ \\
        \hline
    \end{tabular}
    \caption{The dictionary data structure of signless Laplacian matrix $Q_{G}$.}
    \label{table: weighted adjacency matrix}
\end{table}

There are $3$ data items in that dictionary (Table~\ref{table: weighted adjacency matrix}), and the number of non-zero elements is proportional to the dimension of $Q_G$, that is, $s_0=3$ and $s= 3\times2^n = 24$. The \textit{subnormalization} of the dictionary-based block encoding is the sum of modules of data values, that is, $\alpha = \left|\alpha_1\right| + \left|\alpha_2\right| + \left|\alpha_3\right|$.

Specifically, if we consider a weighted $k$-regular undirected graph $G$ with $n$ vertices, where for each vertex, it has a weight of $w_0$ and its $k$ adjacent edges have weights of $w_1, \dots, w_k$, respectively. The corresponding signless Laplacian matrix $Q_G$ has exactly $k+1$ non-zero elements in each row and column, $\alpha_0, \alpha_1, \dots, \alpha_k$, where $\alpha_0$ lies on the diagonal. This signless Laplacian can be encoded in an $n$-qubit system using a dictionary with parameters $s_0 = k+1$ and $s = (k+1)2^n$. The \textit{subnormalization} for the dictionary-based sparse block-encoding is given by $\sum_{i=0}^k |\alpha_i|$. For $n\ge\log (k+1)$, that \textit{subnormalization} is less than the Frobenius norm $\Vert Q_{G}\Vert_F$ by the Cauchy-Schwarz inequality as
$$
\Vert Q_{G}\Vert_F=\sqrt{2^n\sum_{i=1}^{k+1}\vert \alpha_i\vert^2}\ge\sqrt{(k+1)\sum_{i=1}^{k+1}\vert \alpha_i\vert^2 }\ge   \sum_{i=1}^{k+1}\left|\alpha_i\right| = \alpha.
$$

\subsection{Two-Dimensional Discrete Laplacian}
Consider the finite difference two-dimensional Laplacian which has been discussed in~\cite{sunderhauf2024block}. It is discretized by a regular grid of size $N_x\times N_y$ as
\begin{equation*}
    \Delta f(x_a, y_b) = \frac{f(x_{a-1},y_b) - 2 f(x_a, y_b) + f(x_{a+1}, y_b)}{(\Delta x)^2} + \frac{f(x_a,y_{b-1}) - 2 f(x_a, y_b) + f(x_{a}, y_{b+1})}{(\Delta y)^2}.
\end{equation*}
To encode the values of the grid, the grid is reshaped as an $N_xN_y$-dimensional vector,
\begin{equation*}
    f_{a+bN_x} = f(x_a,y_b),\ a\in[0, N_x-1],\ b\in[0,N_y-1].
    \label{eq:2d encoding}
\end{equation*}
Then, all non-zero elements of Laplacian matrix $A$ are given by
\begin{equation*}
    A_{a_1+b_1N_x, a_2+b_2N_x} = \begin{cases} A_0 := -2(1/(\Delta x)^2 + 1/(\Delta y)^2), & \text{for}\ a_1=a_2,\ b_1=b_2, \\
    A_1 := 1/(\Delta x)^2, & \text{for}\ |a_1-a_2|=1,\  b_1=b_2, \\
    A_2 := 1/(\Delta y)^2,  & \text{for}\ |b_1-b_2|=1,\ a_1=a_2. \\
    \end{cases}
\end{equation*}
For $N_x=4$, $N_y=4$, the matrix is in form of 
\begin{equation}
\begin{tikzpicture}
  \matrix (m)[
    matrix of math nodes,
    nodes in empty cells,
    left delimiter=(,
    right delimiter=)
  ] {
       A_0    & A_1  &  &  &  A_2     & &    \\
       A_1 & A_0   & A_1 & &  & A_2    & &  \\
       & A_1    &  A_0  & A_1  &   &    & A_2  \\
       & & A_1 & A_0 & & & & A_2 \\
       A_2&&&& A_0    & A_1  &&& A_2   \\
       &A_2&&& A_1 & A_0   & A_1 &&& A_2    \\
       &&A_2&& & A_1    &  A_0  & A_1&&&A_2  \\
       &&&A_2& & & A_1 & A_0  &&&& A_2 \\
       &&&&A_2&&&& A_0    & A_1  &  &  &  A_2     & &    \\
       &&&&&A_2&&&A_1 & A_0   & A_1 & &  & A_2    & &  \\
       &&&&&&A_2&&& A_1    &  A_0  & A_1  &   &    & A_2  \\
       &&&&&&&A_2&& & A_1 & A_0 & & & & A_2 \\
       &&&&&&&&A_2&&&& A_0    & A_1     \\
       &&&&&&&&&A_2&&&A_1 & A_0   & A_1  \\
       &&&&&&&&&&A_2&&& A_1    &  A_0  & A_1  \\
       &&&&&&&&&&&A_2&& & A_1 & A_0 \\
  };
  \draw[dashed] (m-1-1.north west) rectangle (m-4-4.south east);
  \draw[dashed] (m-5-5.north west) rectangle (m-8-8.south east);
  \draw[dashed] (m-9-9.north west) rectangle (m-12-12.south east);
  \draw[dashed] (m-13-13.north west) rectangle (m-16-16.south east);
\end{tikzpicture}
\label{eq:poisson structure}
\end{equation}
This matrix can be encoded using a dictionary as stated in Table~\ref{table: Laplacian}, where the encoding parameters $s_0=5$, $s=64$. The matrix can be encoded using the dictionary-based block encoding with \textit{subnormalization} $\alpha = |A_0| + 2(|A_1| + |A_2|)$.

\begin{table}[h!]
    \centering
    \begin{tabular}{|c|l|}
        \hline
        \textbf{Keys} & \multicolumn{1}{c|}{\textbf{Values}} \\
        \hline
        $0$ & $\left\{(a_{ij},i,j): a_{ij}=A_{0}, (i,j)\in\left(j,[0,15]\right\}\right)$ \\
        \hline
        $1$ & $\left\{(a_{ij},i,j): a_{ij}=A_{1}, (i,j)\in\left(j-1,\left\{j\in[0,15]:{\rm mod}(j,4)\neq0\right\}\right)\right\}$ \\
        \hline
        $2$ & $\left\{(a_{ij},i,j): a_{ij}=A_{1}, (i,j)\in\left(j+1,\left\{j\in[0,15]:{\rm mod}(j,4)\neq3\right\}\right)\right\}$ \\
        \hline
        $3$ & $\left\{(a_{ij},i,j): a_{ij}=A_{2}, (i,j)\in\left(j-4,[4,15]\right)\right\}$ \\
        \hline
        $4$ & $\left\{(a_{ij},i,j): a_{ij}=A_{2}, (i,j)\in\left(j+4,[0,11]\right)\right\}$ \\
        \hline
    \end{tabular}
    \caption{The dictionary data structure of two-dimensional discrete Laplacian $A$ in Equation \eqref{eq:poisson structure}.}
    \label{table: Laplacian}
\end{table}

\subsection{Matrices in Ocean Acoustic GEPs}
Ocean acoustics plays a crucial role in understanding underwater sound propagation. 
The problem of solving the acoustic field essentially reduces to solving the following differential equations, 
\begin{equation*}
	\frac{d^2 \varphi_{m}\left(z\right)}{dz^2} + \left(\frac{\omega^{2}}{c^{2}\left(z\right)} - k_{m}^{2}\right)\varphi_{m}\left(z\right) = \boldsymbol{0},
\end{equation*}
\begin{equation*}
	\begin{cases}
		\frac{d\xi_{m,1}\left(z\right)}{dz} = -\xi_{m,2}\left(z\right) + \frac{1}{\mu(z)} \xi_{m,4}\left(z\right), \\
		\frac{d\xi_{m,2}\left(z\right)}{dz} = \frac{\lambda(z)k_m^2}{\lambda(z)+2\mu(z)} \xi_{m,1}\left(z\right) + \frac{1}{\lambda(z)+2\mu(z)} \xi_{m,4}\left(z\right), \\
		\frac{d\xi_{m,3}\left(z\right)}{dz} = \left(\frac{4\mu(z)\left(\lambda(z)+\mu(z)\right)k_m^2}{\lambda(z)+2\mu(z)} - \rho(z)\omega^2\right) \xi_{m,1}\left(z\right) - \frac{\lambda(z)}{\lambda(z)+2\mu(z)}\xi_{m,4}\left(z\right), \\
		\frac{d\xi_{m,4}\left(z\right)}{dz} = -\rho(z)\omega^2 \xi_{m,2}\left(z\right) + k_m^2 \xi_{m,3}\left(z\right).
	\end{cases}
\end{equation*}
which respectively describe the propagation of sound pressure modes $\varphi_m(z)$ in seawater and the propagation of stress-displacement modes $\xi_m(z)=(\xi_{m,1}(z),\xi_{m,2}(z),\xi_{m,3}(z),\xi_{m,4}(z))^{\rm T}$ in ice. The quantities to be determined include the sound pressure modes, the stress-displacement modes, and the corresponding wave numbers $k_m$.

Using the finite difference method, it can be further transformed into the following generalized eigenvalue problem~\cite{aki2002quantitative,jensen2011computational}, 
\begin{equation}\label{equation: GEPs}
    AV=BV\Sigma.
\end{equation}
The details of this problem and the general structure of $A$ and $B$ are shown in Appendix \ref{section: GEPs in ocean acoustics}. The dimensions of the matrices $A$ and $B$ are $4N_1+N_2+5$, where $N_1$ and $N_2$ represent the numbers of stratified grid layers for ice and seawater, respectively. These two matrices can be encoded in a system with $n=\lceil \log (4N_1+N_2+5)\rceil$ qubits. Based on their structure, their dictionary data structures are constructed and shown in Tables~\ref{table: dictionary of A} and \ref{table: dictionary of B} of Appendix~\ref{section: GEPs in ocean acoustics}, where the dictionaries satisfy that the data functions $c_l(j)$ are all injective.

For matrix $A$ in Equation~\eqref{equation: GEPs}, there are $21$ data items in the encoding dictionary of $A$ and the non-zero count is associated with $N_1$ and $N_2$. Thus, the encoding parameters of $A$ are $s_0=19$ and $s=5+24N_1+3N_2$. The sum of the modules of all data values serves as the \textit{subnormalization} in its block encoding, that is,
$$\left|a_{0}\right| + 2(\left|a_{1}\right| + \left|a_{2}\right| + \left|a_{3}\right| + \left|a_{4}\right| + \left|a_{5}\right| + \left|a_{6}\right|) + \left|a_{7}\right| + \left|a_{8}\right| + \left|a_{9}\right| + \left|a_{10}\right| + \left|a_{11}\right| + \left|a_{12}\right|.$$

For the matrix $B$ in Equation~\eqref{equation: GEPs}, the number of data items is $s_0=9$, and the non-zero elements count is $s=3+6N_1+N_2$. Besides, its \textit{subnormalization} of the block encoding derived from the associated dictionary is
$$
2(\left|b_0\right|+\left|b_1\right|+\left|b_2\right|)+\left|b_3\right|+\left|b_4\right|+\left|b_5\right|.
$$

\section{Conclusion and Outlook}
\label{section: conclusion and outlook}

In this work, we have advanced the implementation of sparse matrix block encodings through optimized \textit{circuit depth} architectures. 

We have proposed a unified dictionary data structure (Table~\ref{dictionary}) that generalizes multiple block-encoding protocols of sparse matrices~\cite{gilyen2019quantum,parker2024sfable,camps2024explicit,sunderhauf2024block}. Our proposed dictionary-based block-encoding framework of sparse matrices was constructed using the dictionary data function specified in Table~\ref{table: dictionary data structure in this paper}. The dictionary-based block encoding for complex matrices (Theorem~\ref{theorem: sparse block encoding}) and its Hermitian extension for non-negative symmetric matrices (Theorem~\ref{theorem: sparse Hermitian block encoding}) generalized the results in \cite{camps2024explicit}, achieving improved \textit{subnormalization}. Furthermore, we have demonstrated that the dictionary-based block encoding of sparse matrices can be viewed as a special case of linear combination of unitaries.

Based on the sparse Boolean memory architecture, we have established fundamental bounds on quantum \textit{circuit depth} requirements for the implementation of the dictionary-based block encoding of sparse matrices (Theorems~\ref{theorem: circuit depth}). Our theoretical analysis revealed a doubly logarithmic scaling relationship: the required \textit{circuit depth} demonstrates $\mathcal{O}(\log n)$ dependence on the Hilbert space dimension $N=2^n$, while maintaining $\mathcal{O}(\log s)$ scaling with respect to the number of non-zero elements $s$. To facilitate the rigorous comparison across different block-encoding protocols, we propose a novel time-efficiency metric that combines \textit{circuit depth} with \textit{subnormalization}. For matrices with repeated elements, our approach achieved superior \textit{subnormalization} and \textit{circuit depth} compared to existing sparse block-encoding protocols (Table~\ref{table: time metric}). We have demonstrated the effectiveness of our method through several applications, including graph problems, two-dimensional discrete Laplacian operators, and ocean acoustic GEPs.

While our block-encoding scheme achieved an efficient \textit{time metric}, two fundamental questions remain open: (1) the optimality of our \textit{circuit depth} $\mathcal{O}(\log (ns))$, and (2) the gap between our \textit{subnormalization} $\alpha = \sum_{l}|A_{l}|$ and the theoretical minimum $\|A\|_{\rm sp}$. These observations reveal promising directions for future optimization.

\section*{Acknowledgments}
This work is supported by the Stable Supporting Fund of Acoustic Science and Technology Laboratory (Grant No. JCKYS2024604SSJS001), the Fundamental Research Funds for the Central Universities (Grant No. 3072024XX2401), the Hong Kong Research Grants Council (RGC) (Grant No. 15213924), and the National Natural Science Foundation of China (Grant No. 62173288).

\bibliography{references}

\appendix
\section{Block Encoding and Dictionary Data Structure}\label{section: details of block encoding}
Block encoding is a technique that represents a non-unitary operator $A$ as a subsystem of a larger unitary operator $U_A$. 
\begin{definition}[Block Encoding \cite{gilyen2019quantum}]
	Suppose that $A$ is an $n$-qubit operator, $N=2^{n},\alpha,\epsilon\in\mathbb{R}_{+}$ and $a \in \mathbb{N}$, then we say that the $(n + a)$-qubit unitary $U_{A}$ is an $(\alpha,a,\epsilon )$ block encoding of A, if
	\begin{equation*}
		\left\|A-\alpha\left(\bra{0}^{\otimes a}\otimes I_{N} \right) U_{A} \left(\ket{0}^{\otimes a} \otimes I_{N} \right)   \right\| \leq \epsilon.
	\end{equation*}
\end{definition}

A Hermitian block encoding is a specialized form of block encoding whose $U_{A}$ is not only unitary but also Hermitian. Consequently, the encoded matrix $A$ must also be Hermitian.
\begin{definition}[Hermitian Block Encoding(\cite{linlin2022})]
	Let $U_{A}$ be an $\left(\alpha, a, \epsilon\right)$ block encoding of a Hermitian matrix $A$. If $U_{A}$ is also Hermitian, then it is called an $\left(\alpha, a, \epsilon\right)$ Hermitian block encoding of $A$. When $\epsilon= 0$, it is called an $\left(\alpha, a\right)$ Hermitian block encoding. The set of all $\left(\alpha, a, \epsilon\right)$ Hermitian block encodings of $A$ is denoted by ${\rm HBE}_{\alpha, a}\left(A,\epsilon\right)$, and ${\rm HBE}_{\alpha, a}\left(A\right)={\rm HBE}_{\alpha, a}\left(A,0\right)$. 
\end{definition}

\subsection{Several Block Encoding Protocols of Sparse Matrices within the Framework of Dictionary Data Structure}\label{subsection: several sparse block encodings within dictionary}

Different dictionary data structures result in different block-encoding protocols. Within the unified framework proposed in section~\ref{subsection: Dictionary Data Structure}, we can summarize several block-encoding protocols of sparse matrices in~\cite{gilyen2019quantum,parker2024sfable,camps2024explicit,sunderhauf2024block}. Given triplets $(a_{ij}, i, j)$ that encode a sparse matrix, these protocols satisfy the following row and column rules, respectively.

\begin{enumerate}
    \item A sparse matrix can be characterized by two data functions: $j = r(i,k)$ locating the $k$-th non-zero element in the $i$-th row ($k \in [0,S_r-1]$) and $i = c(l,j)$ finding the $l$-th non-zero element in the $j$-th column ($l \in [0,S_c-1]$), where $S_r$/$S_c$ denotes the row/column sparsities. Based on this, the triplets $(a_{ij},i,j)$ form the dictionary data structure of~\cite{gilyen2019quantum}.
    
    \item The positions of elements in a sparse matrix can be fully characterized by a vectorized index mapping $\alpha = iN + j$, where $N$ is the dimension of the matrix. All triplets $(a_{ij},i,j)$ with the vectorized indices $\alpha$ form the fundamental data items of the dictionary data structure of~\cite{parker2024sfable}, enabling efficient storage and retrieval.

    \item The positions of non-zero elements in a sparse matrix can be characterized by a bijective function $i = c(l,j)$ mapping $(l,j)$ pairs to row indices $i$, where $j$ denotes the column index and $l$ enumerates non-zero elements in every column. All non-zero elements labeled by $l$ should share both the same value $A_l$ and the same function $c(l,j)$. The corresponding triplets form a data item, which constitutes the dictionary data structure of~\cite{camps2024explicit}. It compactly represents the sparse matrix through its non-zero pattern and value distribution.

    \item The row and column index sets of a sparse matrix are characterized by two data functions: $(j, s_c) = c(d, m)$ for column indices and column sparsity indices, and $(i, s_r) = r(d, m)$ for row indices and row sparsity indices, where $d \in[0,D-1]$ is the data index, $D$ is the number of distinct non-zero values $A_d$, $m \in [0,M-1]$ is the multiplicity index, $M$ is the maximum multiplicity of any of $D$ data items. All non-zero elements sharing the same value $A_d$ under identical functions $c(d, m)$ and $r(d, m)$ form equivalence classes that constitute the dictionary data structure of~\cite{sunderhauf2024block}, providing an efficient representation of sparse matrices through their value distributions and index patterns.
\end{enumerate}

\section{Circuit Depth Comparison}
\label{section: Circuit Depth Comparison}
To demonstrate the circuit-depth efficiency of our block-encoding framework, we conduct a rigorous comparative analysis in this section. Specifically:
\begin{itemize}
    \item In Lemma~\ref{lemma: circuit depth of Sunderhauf}, we analyze the low \textit{circuit depth} implementation of PREP/UNPREP block encoding for sparse matrices~\cite{sunderhauf2024block}, which currently achieves the lowest known \textit{subnormalization} factor for sparse matrices. 
    \item In Lemma~\ref{lemma: comparison depth 2}, we analyze the sparse implementation for the low-circuit-depth block-encoding protocol~\cite{clader2022quantum,zhang2024circuit}, which currently represents the most depth-efficient approach for general matrices.
\end{itemize}
These carefully selected benchmarks provide a comprehensive basis for evaluating our improvements in circuit-depth complexity.

\begin{lemma}
\label{lemma: circuit depth of Sunderhauf}
    Given a matrix $A\in\mathbb{C}^{2^{n}\times 2^{n}}$ with $D$ distinct non-zero elements, row sparsity $S_r$ and column sparsity $S_c$. If $\lceil \log D\rceil \leq \lceil \log S_c\rceil, \lceil \log S_r\rceil$, then the {\rm PREP/UNPREP} block encoding in \cite{sunderhauf2024block} can be implemented with circuit depth $\mathcal{O}\left(n2^{n/2}\right)$ and $n_{\rm anc}$ ancillary qubits using only single- and two-qubit gates, where $n_{\rm anc}\geq \Omega\left(4^{n}/n\right)$.
\end{lemma}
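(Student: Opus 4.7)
The plan is to decompose the PREP/UNPREP block encoding of \cite{sunderhauf2024block} into its constituent oracles and bound the circuit depth and ancilla count of each separately, then aggregate the results. The Sünderhauf et al. protocol is built from four building blocks: a state preparation oracle PREP on an index register of size $\lceil \log D\rceil + \lceil \log M\rceil$ (where $M$ is the maximum multiplicity) weighted by $\sqrt{A_d}$, a column oracle $O_{c}$ that, controlled by the index register, writes a column index into the $n$-qubit data register, a row oracle $O_{r}$ that analogously writes a row index, and an UNPREP oracle which is essentially the adjoint of PREP. I would first identify these pieces explicitly and express $U_A$ as their product.

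Next, I would bound the state-preparation pieces. Because $\lceil \log D\rceil \leq n$ and $\lceil \log M\rceil \leq n$, the PREP/UNPREP oracles act on at most $\mathcal{O}(n)$ qubits, so by Lemma~\ref{State Preparation} they can be realized with depth $\mathcal{O}(n)$ and $\mathcal{O}(2^{n})$ ancillary qubits. These contributions are therefore subdominant in both depth and ancilla count and can be absorbed into the final bounds.

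The dominant cost comes from $O_{c}$ and $O_{r}$. Under the hypothesis $\lceil \log D\rceil \leq \lceil \log S_c\rceil,\lceil \log S_r\rceil$, the controlling index register is large enough that, over all choices of control values, the oracles must realize a family of permutations on the $n$-qubit data register that is rich enough to require arbitrary $n$-qubit unitary synthesis, with no exploitable sparsity pattern. I would then invoke the optimal-depth result for arbitrary $n$-qubit unitary synthesis with ancillary qubits (e.g.\ the Sun--Tian--Yang--Yu--Zhang style construction), which produces an $n$-qubit unitary in depth $\mathcal{O}(n\,2^{n/2})$ using $\Omega(4^{n}/n)$ ancillary qubits. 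Applying this bound to both $O_{c}$ and $O_{r}$ and combining with the $\mathcal{O}(n)$ state-preparation depths yields the advertised overall depth $\mathcal{O}(n\,2^{n/2})$ with $n_{\rm anc}\geq\Omega(4^{n}/n)$.

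The hard part is the middle step: justifying rigorously that the controlled permutations $O_{c}$ and $O_{r}$ are indeed as expensive as generic $n$-qubit unitary synthesis, rather than admitting a cheaper implementation by exploiting the combinatorial structure of the dictionary. To close this gap, I would argue (i) that the PREP/UNPREP scheme imposes no additional algebraic relations on the column/row maps beyond bijectivity within each data class, and (ii) that under the stated inequality on $D$, the degrees of freedom available in $O_{c}$ (resp.\ $O_{r}$) are at least as many as in a generic $n$-qubit permutation unitary. A matching lower bound on ancilla then follows by invoking the known trade-off between circuit depth and ancilla width for arbitrary unitary synthesis at the target depth $\mathcal{O}(n\,2^{n/2})$, which forces $n_{\rm anc}=\Omega(4^{n}/n)$ and thereby yields the lemma.
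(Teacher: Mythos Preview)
Your high-level plan (decompose into oracles, bound each, aggregate) matches the paper's, and you correctly identify generic unitary synthesis as the tool that governs the depth of the column/row oracles. However, the ``hard part'' you isolate is a misreading of the lemma and is not needed at all.

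The statement is purely constructive: it asserts that \emph{given} $n_{\rm anc}\geq\Omega(4^n/n)$ ancillas, the PREP/UNPREP block encoding \emph{can} be implemented in depth $\mathcal{O}(n2^{n/2})$. The clause ``$n_{\rm anc}\geq\Omega(4^n/n)$'' is the resource regime in which the unitary-synthesis lemma (Lemma~\ref{Unitary synthesis}) delivers depth $\mathcal{O}(n2^{n/2})$; it is not a lower bound to be proved. Consequently there is nothing to ``close'': you simply observe that $\tilde{O}_c$ and $\tilde{O}_r$ are arbitrary unitaries on the $(\lceil\log D\rceil+\lceil\log M\rceil)$-qubit register and apply Lemma~\ref{Unitary synthesis} as an upper bound. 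Your proposed arguments (i) and (ii), which try to show these oracles are \emph{at least} as hard as generic synthesis, would be relevant for a lower bound on depth or ancilla, but the lemma makes no such claim, and the paper's proof contains no such step.

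Two smaller structural corrections bring your decomposition in line with the paper. First, in the S\"underhauf et al.\ scheme $\tilde{O}_c$ and $\tilde{O}_r$ are not controlled operations writing into the data register; they are full unitary relabelings $\ket{d}\ket{m}\mapsto\ket{j}\ket{s_c}$ (resp.\ $\ket{i}\ket{s_r}$) on the combined register of size $\lceil\log D\rceil+\lceil\log M\rceil=\lceil\log N\rceil+\lceil\log S\rceil=\mathcal{O}(n)$, which is exactly why generic unitary synthesis applies directly. Second, PREP/UNPREP act only on the $\lceil\log D\rceil$-qubit $d$ register (the remaining $s$ register receives Hadamards), and there is an additional out-of-range oracle $\tilde{O}_{\rm rg}$ that the paper handles via a sparse Boolean memory; both are subdominant, as you anticipated, but should appear in the decomposition.
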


The proof is given in Appendix~\ref{subsection_proof_lemma}.

\begin{lemma}\label{lemma: comparison depth 2}
    Given a matrix  $A\in\mathbb{C}^{2^{n}\times 2^{n}}$ with $s$ non-zero elements, where $\lceil\log s\rceil = \mathcal{O}(n)$. Then there exists a controlled state preparation oracle $U_L$ and a state preparation oracle $U_R$ such that
    \begin{equation}
        \begin{aligned}
        U_R\ket{0}^{\otimes n}\ket{j} = \ket{A_j}\ket{j},\quad 
        U_L\ket{0}^{\otimes n} =\ket{A},
        \end{aligned}
        \label{eq_UR_UL}
    \end{equation}
    where $\ket{A_j} = \sum_{k=0}^{2^n-1}\frac{a_{jk}}{\Vert A_{j,\cdot}\Vert_F} \ket{k}$, $\ket{A} = \sum_{j=0}\frac{\Vert A_{j,\cdot}\Vert_F}{\Vert A\Vert_F}\ket{j}$, $A_{j,\cdot}$ denotes the $j$-th row of $A$. Then $U_A= U_R^\dagger\text{SWAP}(U_L\otimes I_{2^n})$ is a block-encoding of $A$ with subnormalization  $\Vert A\Vert_F$, as stated in~\cite{clader2022quantum}, and it can be implemented with circuit depth $\mathcal{O}\left(n\right)$ and $\mathcal{O}\left(2^{2n}\right)$ ancillary qubits, using only single- and two-qubit gates. 
\end{lemma}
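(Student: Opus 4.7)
The plan is to split the argument into two independent pieces: a verification that $U_A=U_R^{\dagger}\,\mathrm{SWAP}\,(U_L\otimes I_{2^n})$ indeed block encodes $A$ with subnormalization $\|A\|_F$, and an analysis of the gate-level implementation of each of the three building blocks $U_L$, $U_R$ and $\mathrm{SWAP}$ under the $\{\mathrm{U}(2),\mathrm{CNOT}\}$ gate set. The first piece is an elementary linear-algebraic calculation using only the stated identities~\eqref{eq_UR_UL}, while the second relies crucially on the sparse-state-preparation machinery imported from Lemma~\ref{State Preparation} together with a parallelization trick for multiplexed state preparation.

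For the block-encoding property I would compute the matrix element $\bra{0}^{\otimes n}\bra{i}U_A\ket{0}^{\otimes n}\ket{j}$ directly. First, $(U_L\otimes I_{2^n})\ket{0}^{\otimes n}\ket{j}=\ket{A}\ket{j}=\sum_{k}\tfrac{\|A_{k,\cdot}\|_F}{\|A\|_F}\ket{k}\ket{j}$. Next, the $2n$-qubit SWAP sends this to $\ket{j}\ket{A}=\sum_{k}\tfrac{\|A_{k,\cdot}\|_F}{\|A\|_F}\ket{j}\ket{k}$. Using $U_R\ket{0}^{\otimes n}\ket{i}=\ket{A_i}\ket{i}$, one reads off $\bra{0}^{\otimes n}\bra{i}U_R^{\dagger}=\bra{A_i}\bra{i}$, so
\begin{equation*}
\bra{0}^{\otimes n}\bra{i}U_A\ket{0}^{\otimes n}\ket{j}
=\bra{A_i}\bra{i}\bigl(\ket{j}\ket{A}\bigr)
=\braket{i|A}\,\braket{A_i|j}
=\frac{\|A_{i,\cdot}\|_F}{\|A\|_F}\cdot\frac{a_{ij}}{\|A_{i,\cdot}\|_F}
=\frac{a_{ij}}{\|A\|_F},
\end{equation*}
which identifies $U_A$ as a $(\|A\|_F,n,0)$ block encoding of $A$ (up to the usual complex-conjugation convention on $\sqrt{\cdot}$ when entries are complex).

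For the circuit-depth bound I would handle the three factors separately. The operator $U_L$ is just an $n$-qubit state preparation of the Frobenius-weighted state $\ket{A}$, and Lemma~\ref{State Preparation} delivers it with depth $\mathcal{O}(n)$ using $\mathcal{O}(2^n)$ ancillae. The $2n$-qubit $\mathrm{SWAP}$ decomposes into $n$ qubit-pair SWAPs acting on disjoint pairs, each a depth-$3$ CNOT circuit, hence overall depth $\mathcal{O}(1)$. The main obstacle, as expected, is $U_R$: it is a multiplexed (controlled) state preparation, loading a different $n$-qubit state $\ket{A_j}$ for each of $2^n$ values of the control register. My plan here is the standard parallelization trick used in the low-depth input model of~\cite{zhang2024circuit}: instantiate $2^n$ independent copies of the $\ket{A_j}$-preparation circuit on disjoint ancillary registers, all run in parallel in depth $\mathcal{O}(n)$ with $\mathcal{O}(2^n)$ ancillae each, and then use a multiplexed SWAP controlled by $\ket{j}$ to route the required copy into the target register; since each of the $2^n$ parallel preparations uses $\mathcal{O}(2^n)$ ancillae, the total ancilla count is $\mathcal{O}(2^{2n})$, while the multiplexed SWAP admits a logarithmic-depth implementation and therefore preserves the overall depth as $\mathcal{O}(n)$.

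Combining the three depth contributions $\mathcal{O}(n)+\mathcal{O}(1)+\mathcal{O}(n)=\mathcal{O}(n)$ and the ancilla counts $\mathcal{O}(2^n)+0+\mathcal{O}(2^{2n})=\mathcal{O}(2^{2n})$ yields the claimed resource bounds. I expect the only delicate point to be a careful accounting of the multiplexed SWAP's depth and ancillae within the parallelization scheme; the assumption $\lceil\log s\rceil=\mathcal{O}(n)$ ensures that restricting to non-zero entries does not disturb the asymptotics. Sparsity could in principle be leveraged via Lemma~\ref{Sparse State Preparation} to reduce the per-copy ancilla count, but since the upper bound to be proved is $\mathcal{O}(2^{2n})$ the dense construction already suffices.
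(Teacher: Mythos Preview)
Your block-encoding verification and your handling of $U_L$ and $\mathrm{SWAP}$ are fine; the paper simply defers the former to~\cite{clader2022quantum} and treats the latter two essentially the same way (it happens to invoke Lemma~\ref{Sparse State Preparation} rather than Lemma~\ref{State Preparation} for $U_L$, but either suffices for the target bound). The genuine gap is in your construction of $U_R$. Preparing all $2^n$ states $\ket{A_k}$ on disjoint registers $R_k$ and routing $\ket{A_j}$ into the target via a multiplexed SWAP leaves the remaining registers holding $\ket{A_k}$ for $k\neq j$ (and $R_j$ holding the old target $\ket{0}$). These ancillae do not return to $\ket{0}$, so when you later form $\bra{0}_{\mathrm{anc}}\bra{0}\bra{i}\,U_R^{\dagger}$ you pick up an extra factor $\prod_{k\neq i}\braket{A_k|0}$ that depends on $i$ and destroys the block-encoding identity; you would not get $a_{ij}/\|A\|_F$. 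An uncomputation step is mandatory: after fanning $\ket{j}$ out to a one-hot encoding, apply $V_k^{\dagger}$ to $R_k$ conditioned on the $k$-th one-hot bit being $0$; these act on disjoint registers and parallelize, preserving $\mathcal{O}(n)$ depth. This, rather than the multiplexed SWAP, is the delicate point.

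The paper sidesteps this entirely by invoking a black-box result it states just before the proof (Lemma~\ref{lemma: Controlled Quantum State Preparation}, from~\cite{Yuan2023optimalcontrolled}): any controlled state preparation $\ket{j}\ket{0}\mapsto\ket{j}\ket{\psi_j}$ with $k$ control qubits and $n$ target qubits admits a circuit of depth $\mathcal{O}\bigl(n+k+2^{n+k}/(n+k+m)\bigr)$ with $m$ ancillae. Taking $k=n$ and $m=\Theta(2^{2n})$ yields depth $\mathcal{O}(n)$ for $U_R$ directly, with no bespoke parallelization scheme to audit.
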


The \textit{circuit depth} in these two protocols~\cite{sunderhauf2024block,clader2022quantum} (Lemma~\ref{lemma: circuit depth of Sunderhauf} and Lemma~\ref{lemma: comparison depth 2}) primarily arises from their dependence on two computationally costly operations: unitary synthesis procedures and controlled state preparation routines~\cite{Yuan2023optimalcontrolled}. Both approaches inherently require considerable \textit{circuit depth}, a factor that proves especially critical in the context of sparse matrix encodings.

\subsection{Proof of Lemma \ref{lemma: circuit depth of Sunderhauf}}
\label{subsection_proof_lemma}
As stated in Appendix~\ref{subsection: several sparse block encodings within dictionary}, the PREP/UNPREP block encoding scheme in~\cite{sunderhauf2024block} can be described by a dictionary data structure, which is constructed in Table~\ref{table: dictionary data structure in sunderhauf2024block}. The PREP/UNPREP block encoding scheme relies on the equality $\lceil \log_{2}D \rceil + \lceil \log_{2}M \rceil = \lceil \log_{2}N \rceil + \lceil \log_{2}S \rceil$ and the assumption $\lceil \log_{2}D \rceil \leq \lceil \log_{2}S_c \rceil,\lceil \log_{2}S_r \rceil$, where $D$ represents the number of distinct data values, $M$ denotes the maximum multiplicity of data values, $S = \max\{S_c,S_r\}$ captures the matrix sparsity (with $S_c$ and $S_r$ being column and row sparsities, respectively), and $N$ is the dimension of the matrix.
\begin{table}[h!]
    \centering
    \begin{tabular}{|c|c|}
            \hline
            \textbf{Keys} & \textbf{Values} \\
            \hline
            0 & $\left\{(a_{i,j},i,j) : a_{ij} = A_0, (i,j)\in \left(S_r(0), S_c(0)\right), (i,s_r)=r(0,m),(j,s_c)=c(0,m) \right\}$ \\
            \hline
            1 & $\left\{(a_{i,j},i,j) : a_{ij} = A_1, (i,j)\in \left(S_r(1), S_c(1)\right), (i,s_r)=r(1,m),(j,s_c)=c(1,m)  \right\}$ \\
            \hline
            2 & $\left\{(a_{i,j},i,j) : a_{ij} = A_2, (i,j)\in \left(S_r(2), S_c(2)\right), (i,s_r)=r(2,m),(j,s_c)=c(2,m)  \right\}$ \\
            \hline
            \vdots & \vdots \\
            \hline
        \end{tabular}
        \caption{A dictionary data structure for implementing the PREP/UNPREP block-encoding scheme in~\cite{sunderhauf2024block}.}
        \label{table: dictionary data structure in sunderhauf2024block}
\end{table}

\begin{figure}[h!]
    \centering
    \begin{tikzpicture}
        \begin{yquant}
            qubit del;
            qubit d;
            qubit {s} f2;
            qubit {$\ket{j}$} block;
            
            ["north:$\lceil\log D\rceil$" {font=\protect\footnotesize, inner sep=0pt}]
            slash d;
            ["north:$\lceil\log S\rceil-\lceil\log D\rceil$" {font=\protect\footnotesize, inner sep=0pt}]
            slash f2;
            ["north east:$n$" {font=\protect\footnotesize, inner sep=0pt}]
            slash block;
            hspace {5pt} -;
            box {$H_{S/D}$} f2;
            box {PREP} d;
            
            align -;
            [shape=yquant-init, decoration={mirror}] 
            inspect {$s_c$} (d,f2);
            text {$j$} block;
            box {$O_c^\dag$} (f2, block, d);
            discard f2;
            text {$d$} d;
            text {$m$} block;
            ["north:$\lceil\log D\rceil$" {font=\protect\footnotesize, inner sep=0pt}]
            slash d;
            ["north:$\lceil\log M\rceil$" {font=\protect\footnotesize, inner sep=0pt}]
            slash block;
        
            [plusctrl, shape=yquant-circle]
            box {$O_\text{rg}$} (block, f2,d) | del;

            text {$d$} d;
            text {$m$} block;
            hspace {5pt} -;
            box {$O_r$} (f2, block,d);
            settype {qubit} f2;
            inspect {$s_r$} (f2,d);
            text {$i$} block;
            hspace {5pt} -;
            box {UNPREP} d;
            box {$H_{S/D}^\dagger$} f2;
            hspace {5pt} -;
            output {$\ket{i}$} block;
        \end{yquant}
    \end{tikzpicture}
    \caption{Basic framework of PREP/UNPREP block encoding in~\cite{sunderhauf2024block} with dictionary data structure in Table~\ref{table: dictionary data structure in sunderhauf2024block}, where $H_{S/D} = H^{\otimes(\lceil\log S\rceil-\lceil\log D\rceil)}$.}
    \label{circuit: PREP/UNPREP sunderhauf}
\end{figure}

These ensure the versatility in their implementation with the quantum circuit in Figure~\ref{circuit: PREP/UNPREP sunderhauf}. The oracles $O_{c}$ and $O_{r}$ are used to implement the mapping of data functions $c(d,m)$ and $r(d,m)$, respectively, while the oracle $O_{\rm rg}$ identifies the defining domain of these two functions. The oracles PREP and UNPREP are used to prepare the states encoding all distinct data values. The specific statements are as follows.
\begin{itemize}
    \item The out-of-range oracle $\tilde{O}_{\rm rg}$ satisfies
    \begin{equation}
        \tilde{O}_{\rm rg}\ket{d}\ket{m}\ket{0}_{\rm del} = \begin{cases}
            \ket{d}\ket{m}\ket{0}_{\rm del}, & \mbox{if } A_{i(d,m),j(d,m)} = A_d, \\
            \ket{d}\ket{m}\ket{1}_{\rm del}, & \mbox{if } A_{i(d,m),j(d,m)} = 0,
        \end{cases}
        \label{eq_tilde_O_rg}
    \end{equation}
    where $d\in\left[0,D-1\right]$, $m\in\left[0,M-1\right]$.
    
    \item The column and row oracles $\tilde{O}_{c}, \tilde{O}_{r}$ satisfy
    \begin{equation}
        \tilde{O}_{\rm c}\ket{d}\ket{m} = \ket{j}\ket{s_c},\quad
        \tilde{O}_{\rm r}\ket{d}\ket{m} = \ket{i}\ket{s_r}
        \label{eq_tilde_Oc_Or},
    \end{equation}
    where $d\in\left[0,D-1\right]$, $m\in\left[0,M-1\right]$, $i,j\in\left[0,N-1\right]$, $s_c\in\left[0,S_c-1\right]$, $s_r\in\left[0,S_r-1\right]$.
    
    \item The state preparation oracles PREP and UNPREP satisfy
    \begin{gather*}
        \mbox{PREP}\ket{0}^{\otimes \lceil\log_{2}D\rceil} = \frac{1}{\sqrt{\sum_{d=0}^{D-1}\left|A_d\right|}}\sum_{d=0}^{D-1}\mbox{sgn}(A_{d})\sqrt{\left|A_d\right|}\ket{d}, \\
        \bra{0}^{\otimes \lceil\log_{2}D\rceil}\mbox{UNPREP} = \frac{1}{\sqrt{\sum_{d=0}^{D-1}\left|A_d\right|}}\sum_{d=0}^{D-1}\sqrt{\left|A_d\right|}\bra{d}.
    \end{gather*}
\end{itemize}

Noting that the column and row oracles $\tilde{O}_{c},\tilde{O}_{r}$  are unitary synthesis, which takes relatively high \textit{circuit depth} using the state-of-the-art techniques~\cite{Yuan2023optimalcontrolled}, compared to the column oracles $O_c$ \eqref{oracle Oc} in this article. 

\begin{lemma}[Unitary Synthesis~\cite{Yuan2023optimalcontrolled}]\label{Unitary synthesis}
    For any $m \geq 0$, any $n$-qubit unitary can be implemented by a quantum circuit with $m$ ancillary qubits, using single-qubit gates and CNOT gates, of depth $\mathcal{O}\left(\frac{n^{1/2}2^{3n/2}}{m^{1/2}}\right)$ when $\Omega\left(2^n/2\right) \leq \mathcal{O}\left(4^n/n\right)$. In particular, the depth is $\mathcal{O}\left(n2^{n/2}\right)$ when $m\geq \Omega\left(4^n/n\right)$.
\end{lemma}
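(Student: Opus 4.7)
The plan is to decompose the PREP/UNPREP circuit of Figure~\ref{circuit: PREP/UNPREP sunderhauf} into its elementary building blocks, bound the depth and ancilla cost of each one by citing the appropriate lemma from the excerpt, and then take the maximum depth and sum the ancillas. The five building blocks are: the two Hadamard layers $H_{S/D}$, the state preparation oracles PREP and UNPREP, the out-of-range oracle $\tilde{O}_{\rm rg}$ of Equation~\eqref{eq_tilde_O_rg}, and the column/row permutation oracles $\tilde{O}_c,\tilde{O}_r$ of Equation~\eqref{eq_tilde_Oc_Or}.

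First, I would dispatch the cheap components. The two Hadamard layers act on $\lceil\log S\rceil-\lceil\log D\rceil\le n$ disjoint qubits in a single gate layer, so they contribute depth $\mathcal{O}(1)$ and no ancillas. The oracles PREP and UNPREP prepare arbitrary states on $\lceil\log D\rceil\le n$ qubits, so by Lemma~\ref{State Preparation} each costs depth $\mathcal{O}(\lceil\log D\rceil)=\mathcal{O}(n)$ and $\mathcal{O}(2^{\lceil\log D\rceil})=\mathcal{O}(D)$ ancillas, both absorbed by the final bound. The out-of-range oracle $\tilde{O}_{\rm rg}$ in Equation~\eqref{eq_tilde_O_rg} is a Boolean indicator on the $(\lceil\log D\rceil+\lceil\log M\rceil)$-qubit index register; it is a sparse Boolean function selector in the sense of Lemma~\ref{SBM} and is therefore realisable with depth $\mathcal{O}\!\bigl(\log(n\cdot NS/D)\bigr)=\mathcal{O}(\log(ns))$ and polynomially many ancillas, which is negligible.

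Second, I would treat the dominant cost, namely the permutation oracles $\tilde{O}_c$ and $\tilde{O}_r$. Using the identity $\lceil\log D\rceil+\lceil\log M\rceil=\lceil\log N\rceil+\lceil\log S\rceil=n+\lceil\log S\rceil$ enforced by the PREP/UNPREP construction, these oracles are arbitrary unitaries on an $\mathcal{O}(n)$-qubit register. The hypothesis $\lceil\log D\rceil\le\lceil\log S_c\rceil,\lceil\log S_r\rceil$ is precisely what makes the dictionary-representation admissible and keeps the $d$-register no larger than the sparsity register. Plugging the resulting register width into Lemma~\ref{Unitary synthesis} and supplying $m\ge\Omega(4^{n}/n)$ ancillary qubits places us in the second regime of that lemma, giving depth $\mathcal{O}(n\,2^{n/2})$ for each of $\tilde{O}_c$ and $\tilde{O}_r$. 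Summing the five contributions, the column/row syntheses dominate, yielding total depth $\mathcal{O}(n\,2^{n/2})$ with ancilla count at least $\Omega(4^n/n)$, exactly as claimed.

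The step I expect to be the main obstacle is the accounting of qubit width for $\tilde{O}_c,\tilde{O}_r$. Their combined register has size $n+\lceil\log S\rceil$, which in the worst case ($S=\Theta(2^n)$) would inflate the Lemma~\ref{Unitary synthesis} bound to $\mathcal{O}(n\,2^n)$ rather than $\mathcal{O}(n\,2^{n/2})$. The hypothesis $\lceil\log D\rceil\le\lceil\log S_c\rceil,\lceil\log S_r\rceil$ controls $D$ but not $S$, so the delicate part is to argue that we can synthesise these two permutations at the $n$-qubit scale; one would rely on the fact that they are classical reversible maps and that the PREP/UNPREP framework of \cite{sunderhauf2024block} aggregates the relevant information into an $n$-qubit working register before the synthesis is performed, after which the $\lceil\log S\rceil$ excess qubits act only as controls and are absorbed into the ancilla budget $\Omega(4^n/n)$.
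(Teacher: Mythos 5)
Your proposal does not prove the statement under review. The statement is the unitary-synthesis depth/ancilla trade-off of Lemma~\ref{Unitary synthesis}: \emph{any} $n$-qubit unitary can be implemented with $m$ ancillary qubits in depth $\mathcal{O}\bigl(n^{1/2}2^{3n/2}/m^{1/2}\bigr)$, and in depth $\mathcal{O}\bigl(n2^{n/2}\bigr)$ once $m\geq \Omega(4^n/n)$. This is a general circuit-synthesis result imported verbatim from~\cite{Yuan2023optimalcontrolled}; the paper gives no proof of it, and establishing it would require the constructive machinery of that reference (decomposing an arbitrary unitary into layers of diagonal or uniformly controlled operations and parallelizing them across the ancilla register to trade space for depth). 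Nothing in your proposal engages with that construction.

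What you have actually written is an argument for Lemma~\ref{lemma: circuit depth of Sunderhauf}, the depth analysis of the PREP/UNPREP block encoding of~\cite{sunderhauf2024block}: you decompose the circuit of Figure~\ref{circuit: PREP/UNPREP sunderhauf} into $H_{S/D}$, PREP/UNPREP, $\tilde{O}_{\rm rg}$, and $\tilde{O}_c,\tilde{O}_r$, and bound each piece. That is essentially the paper's Appendix~\ref{subsection_proof_lemma}, but it is a different lemma. The decisive problem is that your treatment of the dominant terms $\tilde{O}_c$ and $\tilde{O}_r$ proceeds by ``plugging the resulting register width into Lemma~\ref{Unitary synthesis}'' --- i.e., you invoke the very statement you were asked to prove as a black box. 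As a proof of Lemma~\ref{Unitary synthesis} this is circular and therefore vacuous; as a proof of Lemma~\ref{lemma: circuit depth of Sunderhauf} it is broadly the paper's own route, but that is not the target. To address the actual statement you would need to reproduce (or at least sketch) the synthesis algorithm of~\cite{Yuan2023optimalcontrolled}, including where the $m^{-1/2}$ dependence comes from and why the depth saturates at $\mathcal{O}\bigl(n2^{n/2}\bigr)$ for $m = \Omega(4^n/n)$.
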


\begin{proof}[Proof of Lemma \ref{lemma: circuit depth of Sunderhauf}]
The PREP/UNPREP block encoding scheme in \cite{sunderhauf2024block} comprises five oracles: $\tilde{O}_{\rm rg}$, $\tilde{O}_{c}$, $\tilde{O}_{r}$, PREP, and UNPREP.
\begin{itemize}
    \item The oracle $\tilde{O}_{\rm rg}$ can be implemented by SBM. We introduce a $\left(\lceil\log_{2}D\rceil+\lceil\log_{2}M\rceil\right)$-index, $1$-word, $\tilde{s}_{\rm rg}$-sparse Boolean function $\tilde{f}_{\rm rg}$ such that 
     \begin{equation*}
        \tilde{f}_{\rm rg}(d,m) = \begin{cases}
            0 & \mbox{if } A_{i(d,m),j(d,m)} = A_d, \\
            1 & \mbox{if } A_{i(d,m),j(d,m)} = 0,
        \end{cases}
    \end{equation*}
    where the sparsity is given by
    \begin{equation*}
        \tilde{s}_{\rm rg} = \left|\left\{(d,m):\tilde{f}_{\rm rg}(d,m)=1\right\}\right| = \left|\left\{(d,m):A_{i(d,m),j(d,m)}=0\right\}\right| = \tilde{s}_1 < 4^n,
    \end{equation*}
    $\tilde{s}_1$ is the number of invalid $(d,m)$ and always less than the total number of elements of the matrix. Then, there exits a SBM such that
    \begin{equation*}
        \mbox{select}(\tilde{f}_{\rm rg})=\sum_{d,m=0}^{2^{\lceil\log_{2}D\rceil}-1, 2^{\lceil\log_{2}M\rceil}-1}\ket{d,m}\bra{d,m}\otimes\left(\tilde{f}_{\rm rg}(d,m)X+\overline{\tilde{f}_{\rm rg}}(d,m)I\right).
    \end{equation*}
    By Lemma \ref{SBM}, it can be implemented with \textit{circuit depth} of  $\mathcal{O}\left(\log_{2}\left(\left(\lceil\log_{2}D\rceil+\lceil\log_{2}M\rceil\right) \tilde{s}\right)\right)$ and ancillary qubits of $\mathcal{O}\left(\left(\lceil\log_{2}D\rceil+\lceil\log_{2}M\rceil\right) \tilde{s}\right)$. 

    \item The oracles $\tilde{O}_{c}$ and $\tilde{O}_{r}$ are two $\left(\lceil\log_{2}D\rceil+\lceil\log_{2}M\rceil\right)$-qubit unitary synthesis. By Lemma \ref{Unitary synthesis}, they can be implemented with \textit{circuit depth}  
    \begin{equation*}
        \mathcal{O}\left(\left(\lceil\log_{2}D\rceil+\lceil\log_{2}M\rceil\right)2^{\frac{\lceil\log_{2}D\rceil+\lceil\log_{2}M\rceil}{2}}\right) = \mathcal{O}\left(n2^{n/2}\right)
    \end{equation*}
    and $n_{\rm anc}$ ancillary qubits, where
    \begin{equation*}
        n_{\rm anc} \geq \Omega\left(4^{\lceil\log_{2}D\rceil+\lceil\log_{2}M\rceil}/\left(\lceil\log_{2}D\rceil+\lceil\log_{2}M\rceil\right)\right) = \Omega\left(4^n/n\right).
    \end{equation*}
        
    \item The oracles PREP and UNPREP prepare two $\lceil log_{2}D \rceil$-qubit states. Using the state preparation in Lemma \ref{State Preparation}, they can be implemented with \textit{circuit depth} $\mathcal{O}\left(\lceil log_{2}D \rceil\right)$ and $\mathcal{O}\left(2^{\lceil log_{2}D \rceil}\right)$ ancillary qubits using only single- and two-qubit gates.
\end{itemize}
Since $\lceil \log_{2}D \rceil + \lceil \log_{2}M \rceil =\lceil \log_{2}N \rceil + \lceil \log_{2}S \rceil = \mathcal{O}(n)$, we have $\lceil \log_{2}D \rceil=\mathcal{O}\left(n\right)$ and $\lceil \log_{2}M \rceil=\mathcal{O}\left(n\right)$. Therefore, its PREP/UNPREP block encoding can be implemented with \textit{circuit depth}
\begin{equation*}
    \begin{aligned}
    \mathcal{O}\left(\log_{2}\left(\left(\lceil\log_{2}D\rceil+\lceil\log_{2}M\rceil\right) \tilde{s}_1\right)\right) + \mathcal{O}\left(n2^{n/2}\right) + \mathcal{O}\left(\lceil log_{2}D \rceil\right) = \mathcal{O}\left(n2^{n/2}\right)
    \end{aligned}
\end{equation*}
and 
\begin{equation*}
    \mathcal{O}\left(\left(\lceil\log_{2}D\rceil+\lceil\log_{2}M\rceil\right) \tilde{s}_1\right) + n_{\rm anc} + \mathcal{O}\left(2^{\lceil log_{2}D \rceil}\right) = n_{\rm anc} \geq \Omega\left(4^{n}/n\right)
\end{equation*}
ancillary qubits using only single- and two-qubit gates.
\end{proof}

\subsection{Proof of Lemma \ref{lemma: comparison depth 2}}

\begin{lemma}
    [Controlled Quantum State Preparation~\cite{Yuan2023optimalcontrolled}]
    For any integers $k,m>0$, $n>0$ and any quantum state $\left\{\ket{\psi_j}:j\in\{0,1\}^k\right\}$, the following controlled quantum state preparation
    $$
    \ket{j}\ket{0}^{\otimes n}\to \ket{j}\ket{\psi_j},\ \forall j\in\{0,1\}^{k}
    $$
    can be implemented by a quantum circuit consisting of single-qubit and CNOT gates of depth $\mathcal{O}\left(n+k+\frac{2^{n+k}}{n+k+m}\right)$ and size $\mathcal{O}\left(2^{n+k}\right)$ with ancillary qubits. These bounds are optimal for any $k,m>0$.
    \label{lemma: Controlled Quantum State Preparation}
\end{lemma}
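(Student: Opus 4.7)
My plan is to split the proof into three parts: (i) a direct algebraic verification that $U_A$ is a block encoding of $A/\Vert A\Vert_F$, (ii) a circuit depth analysis, and (iii) an ancillary qubit count. Parts (ii) and (iii) reduce to instantiating Lemma~\ref{State Preparation} and Lemma~\ref{lemma: Controlled Quantum State Preparation} with carefully chosen parameters, then verifying that the surrounding $2n$-qubit SWAP contributes negligibly.

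For (i), I would expand $\bra{0}^{\otimes n}\bra{i}\,U_A\,\ket{0}^{\otimes n}\ket{j}$ in stages. Applying $U_L\otimes I_{2^n}$ sends the input to $\ket{A}\ket{j}=\sum_k \frac{\Vert A_{k,\cdot}\Vert_F}{\Vert A\Vert_F}\ket{k}\ket{j}$; the $2n$-qubit SWAP then yields $\sum_k \frac{\Vert A_{k,\cdot}\Vert_F}{\Vert A\Vert_F}\ket{j}\ket{k}$. Writing the controlled preparation as $U_R=\sum_l V_l\otimes \ket{l}\bra{l}$ with $V_l\ket{0}=\ket{A_l}$, the action of $U_R^\dagger$ on the swapped state produces $\sum_k \frac{\Vert A_{k,\cdot}\Vert_F}{\Vert A\Vert_F}\,V_k^\dagger\ket{j}\otimes\ket{k}$. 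Projecting onto $\ket{0}^{\otimes n}\ket{i}$ selects $k=i$ and reduces to $\bra{0}V_i^\dagger\ket{j}=\bra{A_i}\ket{j}=a_{ij}/\Vert A_{i,\cdot}\Vert_F$ (under the standard conjugation convention encoded into the definition of $\ket{A_i}$). The two row norms cancel, producing the entry $a_{ij}/\Vert A\Vert_F$ and hence the advertised subnormalization $\alpha=\Vert A\Vert_F$.

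For (ii) and (iii), $U_L$ prepares an $n$-qubit state $\ket{A}$, so Lemma~\ref{State Preparation} gives depth $\mathcal{O}(n)$ with $\mathcal{O}(2^n)$ ancillas. The harder component is $U_R$, a controlled preparation of $n$-qubit states $\ket{A_j}$ indexed by the $n$-qubit control register; applying Lemma~\ref{lemma: Controlled Quantum State Preparation} with $k=n$ and target size $n$ gives depth $\mathcal{O}\!\left(n+k+2^{n+k}/(n+k+m)\right)=\mathcal{O}\!\left(n+2^{2n}/(n+m)\right)$ and size $\mathcal{O}(2^{2n})$ with $m$ ancillary qubits. I would choose $m=\Theta(2^{2n}/n)$, which forces the depth term $2^{2n}/(n+m)$ down to $\mathcal{O}(n)$ while leaving the total ancillary budget at $\Theta(2^{2n})$. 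Since the $2n$-qubit SWAP has depth $\mathcal{O}(1)$ and no ancillas and $U_R^\dagger$ has the same cost as $U_R$, summing the three layers gives total depth $\mathcal{O}(n)$ and total ancillas $\mathcal{O}(2^n)+\mathcal{O}(2^{2n})=\mathcal{O}(2^{2n})$.

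The principal obstacle is the parameter balance in Lemma~\ref{lemma: Controlled Quantum State Preparation}: its depth expression only collapses to the target $\mathcal{O}(n)$ in a narrow ancilla regime, and one must confirm that $m=\Theta(2^{2n}/n)$ is both admissible by the lemma and consistent with the quoted circuit size of $\Theta(2^{2n})$ (so that no ``hidden'' ancilla contribution inflates the final count). A secondary issue is verifying that the sparsity hypothesis $\lceil\log s\rceil=\mathcal{O}(n)$ is used correctly: it does not yield any sparsity-based depth savings here—the lemmas invoked are dense-state statements—but it is needed to guarantee that $\ket{A}$ and $\ket{A_j}$ are well-defined normalized $n$-qubit states (i.e., no row is entirely zero while still indexed), so the state-preparation invocations are valid.
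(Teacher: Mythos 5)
Your proposal does not prove the statement you were asked to prove. The statement is the controlled quantum state preparation result of~\cite{Yuan2023optimalcontrolled} itself: that the map $\ket{j}\ket{0}^{\otimes n}\to\ket{j}\ket{\psi_j}$ admits a circuit of depth $\mathcal{O}\left(n+k+\frac{2^{n+k}}{n+k+m}\right)$ and size $\mathcal{O}\left(2^{n+k}\right)$ over single-qubit and CNOT gates, together with the matching optimality claim. What you have written is instead a proof of the downstream application (the paper's Lemma on the $\Vert A\Vert_F$-subnormalized block encoding $U_A=U_R^\dagger\,\mathrm{SWAP}\,(U_L\otimes I_{2^n})$, i.e.\ Lemma~\ref{lemma: comparison depth 2}), and — critically — your argument \emph{invokes} the very lemma under consideration as a black box when you write ``applying Lemma~\ref{lemma: Controlled Quantum State Preparation} with $k=n$ \dots''. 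Relative to the assigned task this is circular: you assume the depth/size bounds you were supposed to establish. Note that the paper itself offers no proof of this lemma; it is imported verbatim from~\cite{Yuan2023optimalcontrolled}, so there is no internal argument for you to match.

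A genuine proof would have to construct the circuit: for instance, decompose the controlled preparation into uniformly controlled rotations over the $(n+k)$-qubit joint register, show how $m$ ancillary qubits allow the $\Theta(2^{n+k})$ elementary rotations to be parallelized into depth $\mathcal{O}\left(\frac{2^{n+k}}{n+k+m}\right)$ plus an additive $\mathcal{O}(n+k)$ overhead for fan-out and uncomputation, verify the gate count $\mathcal{O}(2^{n+k})$, and then prove the lower bounds (a parameter-counting/light-cone argument) that make these bounds optimal for every $k,m>0$. None of these ingredients appears in your proposal. Your block-encoding verification in part (i) and the parameter choice $m=\Theta(2^{2n}/n)$ in parts (ii)--(iii) are reasonable as an account of Lemma~\ref{lemma: comparison depth 2}, but they are answers to a different question.
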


\begin{proof}[Proof of Lemma \ref{lemma: comparison depth 2}]
    We only need to demonstrate the \textit{circuit depth}. This block encoding comprises three oracles: $U_R$, $U_L$, and $\text{SWAP}$.
    \begin{itemize}
        \item $U_R$ is can represented as 
        $$
        \ket{0}^{\otimes n}\ket{j}\xrightarrow{U_R} \ket{A_j}\ket{j}, \ j\in\{0,1\}^{n}
        $$
        which can be implemented by controlled-state preparation with \textit{circuit depth} $\mathcal{O}\left(n\right)$ and $\mathcal{O}\left(2^{2n}\right)$ ancillary qubits by Lemma \ref{lemma: Controlled Quantum State Preparation};
        \item $U_L$ is an $n$-qubit $s$-sparse state-preparation, which can be implemented with \textit{circuit depth} $\mathcal{O}(\log (ns))$ and $\mathcal{O}(ns\log  s)$ ancillary qubits by Lemma~\ref{Sparse State Preparation};
        \item $2n$-qubit SWAP gate can be implemented with \textit{circuit depth} $\mathcal{O}(1)$ without ancillary qubits.
    \end{itemize}
    Above all, $U_A$ can be implemented with \textit{circuit depth}  
    \begin{equation*}
        \mathcal{O}\left(n\right) + \Theta(\log (ns))  = \mathcal{O}\left(n\right)
    \end{equation*}
    and 
    \begin{equation*}
        \mathcal{O}\left(2^{2n}\right) + \mathcal{O}(ns\log  s)) = \mathcal{O}\left(2^{2n}\right)
    \end{equation*}
    ancillary qubits.  
\end{proof}

\section{Generalized Eigenvalue Problems in Ocean Acoustic Modeling}
\label{section: GEPs in ocean acoustics}
In the field of ocean acoustics, the propagation of acoustic waves through seawater and ice is a complex process crucial for obtaining underwater information. This information is essential for various applications, including underwater communication, navigation, and environmental monitoring. Therefore, the study of underwater acoustic propagation is a core component of research in underwater acoustic information. 

Ocean acoustics encompasses several subfields, including shallow sea acoustics, deep sea acoustics, and polar acoustics. In polar acoustics, the interaction of acoustic waves with ice and seawater presents distinct challenges. Typically, sound pressure fields are measured in seawater, whereas displacement fields are measured in ice  \cite{jensen2011computational, aki2002quantitative}. 

Utilizing the normal-mode method, on one hand, the sound pressure field $p\left(r,z\right)$ can be computed from the sound pressure modes $\varphi_{m}\left(z\right)$ \cite{jensen2011computational},
\begin{equation*}
	p\left(r,z\right) = \frac{\imath}{\rho(z_{s}) \sqrt{8\pi r}} e^{-i\pi /4} \sum_{m}\varphi_{m}\left(z_{s}\right) \varphi_{m}\left(z\right) \frac{e^{ik_{m}r}}{\sqrt{k_{m}}},
\end{equation*}
where $r$ is the horizontal distance, $z$ is the depth, $z_{s}$ is the depth of the sound source, $\rho(z_{s})$ is the density of the medium at $z_{s}$, $k_{m}$ is the horizontal wave number and $\imath^{2}=-1$. Meanwhile, the sound pressure modes $\varphi_{m}\left(z\right)$ satisfy the Helmholtz equation
\begin{equation}\label{equation: seawater}
	\frac{d^2 \varphi_{m}\left(z\right)}{dz^2} + \left(\frac{\omega^{2}}{c^{2}\left(z\right)} - k_{m}^{2}\right)\varphi_{m}\left(z\right) = \boldsymbol{0},
\end{equation}
where $\omega$ is the angle frequency of sound wave and $c\left(z\right)$ is the sound velocity. 

On the other hand, the displacement field includes the horizontal displacement field $u\left(r,z\right)$ and the vertical displacement field $w\left(r,z\right)$, which can be computed from the horizontal displacement modes $d^{(1)}_{m}\left(z\right)$ and the vertical displacement modes $d^{(2)}_{m}\left(z\right)$ \cite{aki2002quantitative},
\begin{equation*}
	u\left(r,z\right) = \sum_{m}\frac{d^{(1)}_{m}(z)}{8c_{m}U_{m}I_{m}}\sqrt{\frac{2}{\pi k_{m}r}}e^{\imath\left(k_{m}r-\frac{\pi}{4}\right)} \left\{\left.k_{m}d^{(1)}_{m}(z_{s})+\frac{\mathrm{d}d^{(2)}_{m}(z)}{\mathrm{d}z}\right|_{z_{s}}\right\},
\end{equation*}
\begin{equation*}
	w\left(r,z\right) = \sum_{m}\frac{d^{(2)}_{m}(z)}{8c_{m}U_{m}I_{m}}\sqrt{\frac{2}{\pi k_{m}r}}e^{\imath\left(k_{m}r+\frac{\pi}{4}\right)} \left\{\left.k_{m}d^{(1)}_{m}(z_{s})+\frac{\mathrm{d}d^{(2)}_{m}(z)}{\mathrm{d}z}\right|_{z_{s}}\right\},
\end{equation*}
where $U_{m}=\frac{\mathrm{d}\omega}{\mathrm{d}k_{m}}$ is the group velocity of the displacement mode, $c_{m}=\frac{\omega}{k_{m}}$ is the phase velocity of the displacement mode and $I_{m}=\frac{1}{2}\int \rho(z)\left({d^{(1)}_{m}}^{2}(z)+{d^{(2)}_{m}}^{2}(z)\right)\mathrm{d}z$. 
Denote $\xi_m\left(z\right) = \left(\xi_{m,1}\left(z\right), \xi_{m,2}\left(z\right), \xi_{m,3}\left(z\right), \xi_{m,4}\left(z\right)\right)^{\rm T} = \left(\frac{d^{(1)}_{m}\left(z\right)}{ik_m}, \imath d^{(2)}_{m}\left(z\right), \frac{\tau^{(zx)}_{m}\left(z\right)}{\imath k_m}, \tau^{(zz)}_{m}\left(z\right)\right)^{\rm T}$ be the stress-displacement mode, where $\tau^{(zx)}_{m}\left(z\right)$ and $\tau^{(zz)}_{m}\left(z\right)$ are tangential stress mode and normal stress mode, respectively. The stress-displacement mode satisfies the following system of differential equations,
\begin{equation}\label{equation: ice}
	\begin{cases}
		\frac{d\xi_{m,1}\left(z\right)}{dz} = -\xi_{m,2}\left(z\right) + \frac{1}{\mu(z)} \xi_{m,4}\left(z\right), \\
		\frac{d\xi_{m,2}\left(z\right)}{dz} = \frac{\lambda(z)k_m^2}{\lambda(z)+2\mu(z)} \xi_{m,1}\left(z\right) + \frac{1}{\lambda(z)+2\mu(z)} \xi_{m,4}\left(z\right), \\
		\frac{d\xi_{m,3}\left(z\right)}{dz} = \left(\frac{4\mu(z)\left(\lambda(z)+\mu(z)\right)k_m^2}{\lambda(z)+2\mu(z)} - \rho(z)\omega^2\right) \xi_{m,1}\left(z\right) - \frac{\lambda(z)}{\lambda(z)+2\mu(z)}\xi_{m,4}\left(z\right), \\
		\frac{d\xi_{m,4}\left(z\right)}{dz} = -\rho(z)\omega^2 \xi_{m,2}\left(z\right) + k_m^2 \xi_{m,3}\left(z\right),
	\end{cases}
\end{equation}
where $\lambda(z)$ and $\mu(z)$ are the Lam\'e coefficients of ice.

\begin{figure}[htbp]
	\centering
	\includegraphics[scale=0.6]{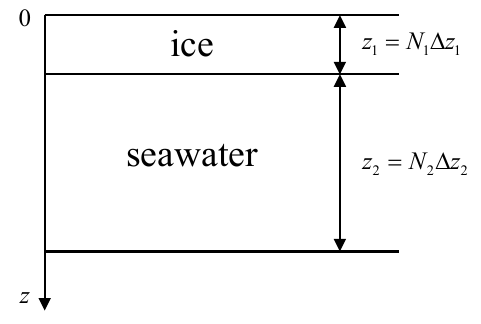}
	\caption{A simple ocean environment model.}
	\label{model}
\end{figure}

A simplified ocean environment model is depicted in Figure \ref{model}, the depths of ice and seawater are discretized into $z_{1}=N_{1}\Delta z_{1}$ and $z_{2}=N_{2}\Delta z_{2}$. Coupled with the boundary conditions and by finite difference methods, Equation \eqref{equation: seawater} and \eqref{equation: ice} can be transformed into the GEPs of the form
\begin{equation}
	AV=BV\Sigma,
    \label{appendix equation: GEPs}
\end{equation}
where $A$ and $B$ are sparse structured matrices, $V$ is a matrix whose columns are generalized eigenvectors which store modes $\varphi_{m}\left(z\right)$, $d^{(1)}_{m}\left(z\right)$, $d^{(2)}_{m}\left(z\right)$, and $\Sigma$ is a diagonal matrix with its diagonal elements being generalized eigenvalues which are the square of the horizontal wave numbers $k_{m}$. The dimensions of $A,B$ are $4N_{1}+N_{2}+5$, which are typically quite large. All non-zero-value elements are located near the diagonal with strong repeatability. We present the structure of matrices $A$ and $B$ in Figure~\ref{appendix matrix AB in GEPs}.
\begin{figure}[h!]
    \centering
    \begin{minipage}{0.48\linewidth}
        \centering
        \includegraphics[width=\linewidth]{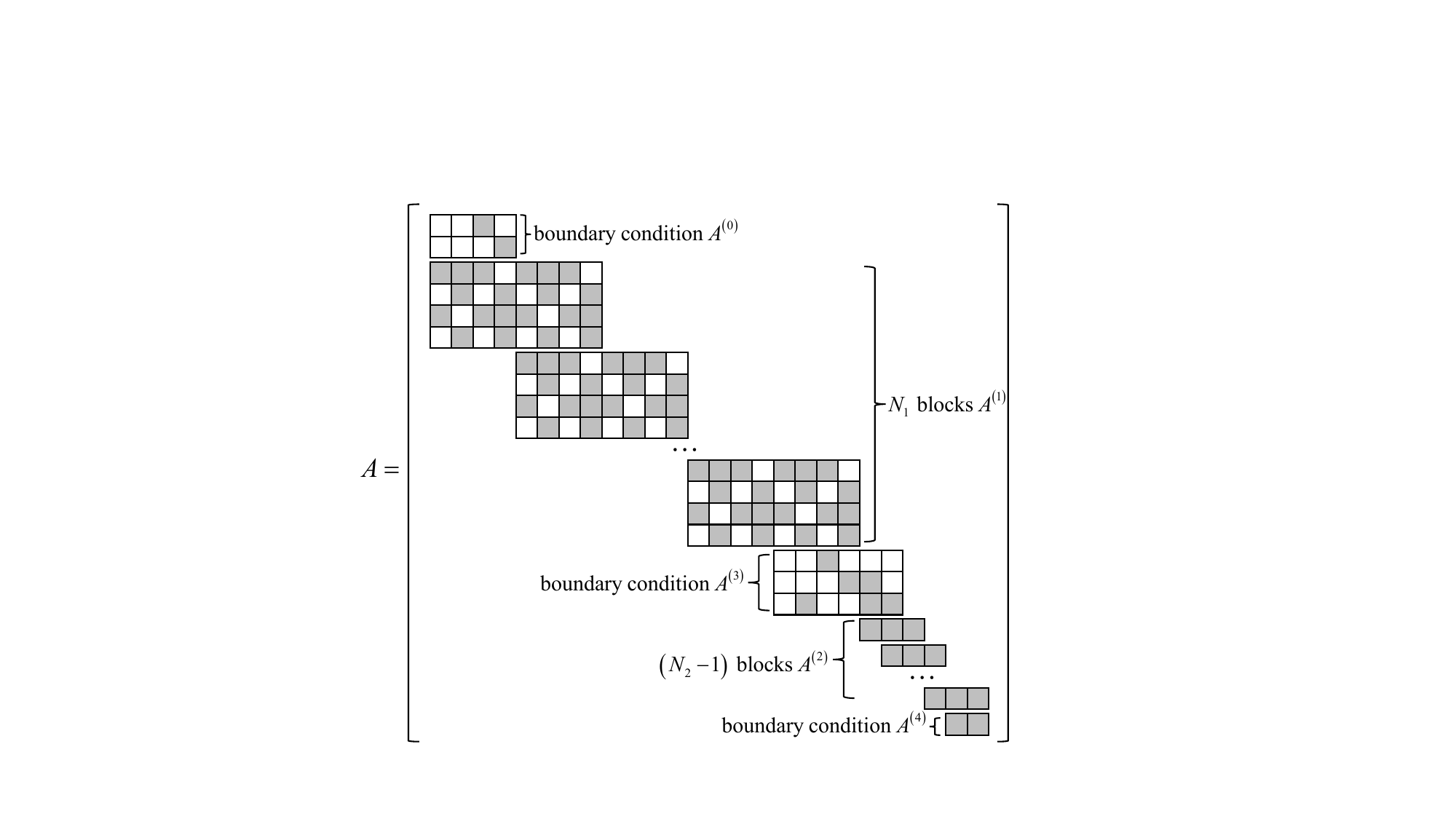}
    \end{minipage}
    \begin{minipage}{0.48\linewidth}
        \centering
        \includegraphics[width=\linewidth]{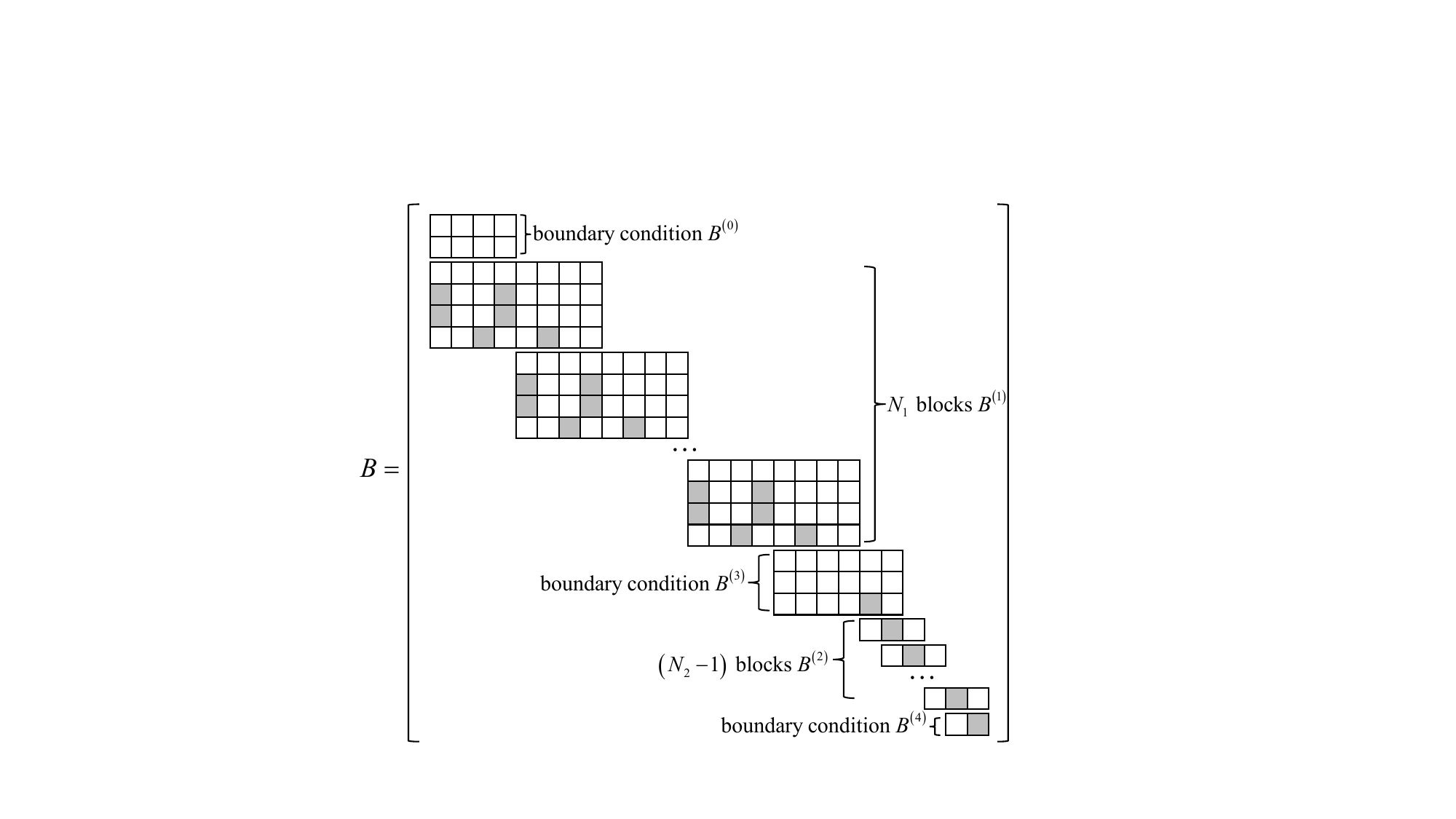}
    \end{minipage}
    \caption{The matrices $A$ and $B$ of GEPs in Equation~\eqref{appendix equation: GEPs}, where the gray and white squares represent a non-zero element and a zero element, respectively.}
    \label{appendix matrix AB in GEPs}
\end{figure}

The sub-matrix $\{A^{(i)}\}_{i=0}^{4}$ and  $\{B^{(i)}\}_{i=0}^{4}$ in the Figure~\ref{appendix matrix AB in GEPs} are defined as

\begin{equation*}
    A^{(0)} = \begin{pmatrix}
        0 & 0 & a_3 & 0 \\
        0 & 0 & 0 & a_3
    \end{pmatrix},\quad
    A^{(1)} = \begin{pmatrix}
        a_0 & a_2 & a_4 & 0 & a_7 & a_2 & a_4 & 0 \\
        0 & a_0 & 0 & a_5 & 0 & a_7 & 0 & a_5 \\
        a_{1} & 0 & a_{0} & a_{6} & a_1 & 0 & a_{7} & a_{6} \\
        0 & a_{1} & 0 & a_{0} & 0 & a_{1} & 0 & a_{7} \\
    \end{pmatrix},\quad
    A^{(2)} = \begin{pmatrix}
        a_3 & a_{10} & a_3
    \end{pmatrix},
\end{equation*}
\begin{equation*}
    A^{(3)} = \begin{pmatrix}
        0 & 0 & a_{3} & 0 & 0 & 0 \\	
	0 & 0 & 0 & a_{3} & a_{3} & 0 \\	
	0 & a_{8} & 0 & 0 & a_{9} & a_{3}
    \end{pmatrix},\quad
    A^{(4)} = \begin{pmatrix}
        a_{11} & a_{12}
    \end{pmatrix}, 
\end{equation*}
\begin{gather*}
    B^{(0)} = \begin{pmatrix}
        0 & 0 & 0 & 0 \\
        0 & 0 & 0 & 0
    \end{pmatrix},\quad
    B^{(1)} = \begin{pmatrix}
        0 & 0 & 0 & 0 & 0 & 0 & 0 & 0 \\
        b_{0} & 0 & 0 & b_{0} & 0 & 0 & 0 & 0 \\
        b_{2} & 0 & 0 & b_{2} & 0 & 0 & 0 & 0 \\
        0 & 0 & b_{1} & 0 & 0 & b_{1} & 0 & 0
    \end{pmatrix},\quad
    B^{(2)} = \begin{pmatrix}
        0 & b_4 & 0
    \end{pmatrix},\\
    B^{(3)} = \begin{pmatrix}
        0 & 0 & 0 & 0 & 0 & 0 \\
	0 & 0 & 0 & 0 & 0 & 0 \\
	  0 & 0 & 0 & 0 & b_{3} & 0
    \end{pmatrix},\quad
    B^{(4)} = \begin{pmatrix}
        0 & b_5
    \end{pmatrix}.
\end{gather*}

\begin{table}[h!]
    \centering
    \begin{tabular}{|c|l|}
        \hline
        \textbf{Keys} & \multicolumn{1}{c|}{\textbf{Values}} \\
        \hline
        $0$ & $\left\{(a_{ij},i,j): a_{ij} = a_0, (i,j)\in\left(j+2, [0,4N_1-1]\right)\right\}$ \\
        \hline
        $1$ & $\left\{(a_{ij},i,j): a_{ij} = a_1, (i,j)\in\left(j+4, \left\{[0,4N_1-3]:{\rm mod}(j,4)=0\mbox{ or }1\right\}\right)\right\}$ \\
        \hline
        $2$ & $\left\{(a_{ij},i,j): a_{ij} = a_1, (i,j)\in\left(j, \left\{[4,4N_1+1]:{\rm mod}(j,4)=0\mbox{ or }1\right\}\right)\right\}$ \\
        \hline
        $3$ & $\left\{(a_{ij},i,j): a_{ij} = a_2, (i,j)\in\left(j+1, \left\{[1,4N_1-3]:{\rm mod}(j,4)=1\right\}\right)\right\}$ \\
        \hline
        $4$ & $\left\{(a_{ij},i,j): a_{ij} = a_2, (i,j)\in\left(j-3, \left\{[5,4N_1+1]:{\rm mod}(j,4)=1\right\}\right)\right\}$ \\
        \hline
        $5$ & \makecell{$\left\{(a_{ij},i,j): a_{ij} = a_3, (i,j)\in\left(j+1, \left\{[4N_1+4,4N_1+N_2+2]\right\}\right)\right\}$\\
        $\bigcup \left\{(a_{ij},i,j): a_{ij} = a_3, (i,j)\in\left(j, \left\{4N_1+2,4N_1+3\right\}\right)\right\}$\\
        $\bigcup\left\{(a_{ij},i,j): a_{ij} = a_3, (i,j)\in\left(j-2, \left\{2,3\right\}\right)\right\}$} \\
        \hline
        $6$ & $\left\{(a_{ij},i,j): a_{ij} = a_3, (i,j)\in\left(j-1, \left\{[4N_1+4,4N_1+N_2+4]\right\}\right)\right\}$ \\
        \hline
        $7$ & $\left\{(a_{ij},i,j): a_{ij} = a_4 (i,j)\in\left(j, \left\{[2,4N_1-2]:{\rm mod}(j,4)=2\right\}\right)\right\}$ \\
        \hline
        $8$ & $\left\{(a_{ij},i,j): a_{ij} = a_4, (i,j)\in\left(j-4, \left\{[6,4N_1+2]:{\rm mod}(j,4)=2\right\}\right)\right\}$ \\
        \hline
        $9$ & $\left\{(a_{ij},i,j): a_{ij} = a_5, (i,j)\in\left(j-4,\left\{[7,4N_1+3]:{\rm mod}(j,4)=3\right\}\right)\right\}$ \\
        \hline
        $10$ & $\left\{(a_{ij},i,j): a_{ij} = a_5, (i,j)\in\left(j, \left\{[3,4N_1-1]:{\rm mod}(j,4)=3\right\}\right)\right\}$ \\
        \hline
        $11$ & $\left\{(a_{ij},i,j): a_{ij} = a_6, (i,j)\in\left(j-3, \left\{[7,4N_1+3]:{\rm mod}(j,4)=3\right\}\right)\right\}$ \\
        \hline
        $12$ & $\left\{(a_{ij},i,j): a_{ij} = a_6, (i,j)\in\left(j+1, \left\{[3,4N_1-1]:{\rm mod}(j,4)=3\right\}\right)\right\}$ \\
        \hline
        $13$ & $\left\{(a_{ij},i,j): a_{ij} = a_7, (i,j)\in\left(j-2, [4,4N_1+3]\right)\right\}$ \\
        \hline
        $14$ & $\left\{(a_{ij},i,j): a_{ij} = a_8, (i,j)\in\left(j+3, \left\{4N_1+1\right\}\right)\right\}$ \\
        \hline
        $15$ & $\left\{(a_{ij},i,j): a_{ij} = a_{9}, (i,j)\in\left(j, \left\{4N_1+4\right\}\right)\right\}$ \\
        \hline
        $16$ & $\left\{(a_{ij},i,j): a_{ij} = a_{10}, (i,j)\in\left(j, [4N_1+5,4N_1+N_2+3]\right)\right\}$ \\
        \hline
        $17$ & $\left\{(a_{ij},i,j): a_{ij} = a_{11}, (i,j)\in\left(j+1, \left\{4N_1+N_2+3\right\}\right)\right\}$ \\
        \hline
        $18$ & $\left\{(a_{ij},i,j): a_{ij} = a_{12}, (i,j)\in\left(j, \left\{4N_1+N_2+4\right\}\right)\right\}$ \\
        \hline
    \end{tabular}
    \caption{The dictionary data structure of the matrix $A$ of GEPs in Equation~\eqref{equation: GEPs}.}
    \label{table: dictionary of A}
\end{table}

\begin{table}[h!]
    \centering
    \begin{tabular}{|c|l|}
        \hline
        \textbf{Keys} & \multicolumn{1}{c|}{\textbf{Values}} \\
        \hline
        $0$ & $\left\{(b_{ij},i,j): b_{ij} = b_0, (i,j)\in\left(j+3, \left\{[0,4N_1-4]:{\rm mod}(j,4)=0\right\}\right)\right\}$ \\
        \hline
        $1$ & $ \left\{(b_{ij},i,j): b_{ij} = b_0, (i,j)\in\left(j-1, \left\{[4,4N_1]:{\rm mod}(j,4)=0\right\}\right)\right\}$ \\
        \hline
        $2$ & $\left\{(b_{ij},i,j): b_{ij} = b_1, (i,j)\in\left(j+3, \left\{[2,4N_1-2]:{\rm mod}(j,4)=2\right\}\right)\right\}$ \\
        \hline
        $3$ & $\left\{(b_{ij},i,j): b_{ij} = b_1, (i,j)\in\left(j-1, \left\{[6,4N_1+2]:{\rm mod}(j,4)=2\right\}\right)\right\}$ \\
        \hline
        $4$ & $\left\{(b_{ij},i,j): b_{ij} = b_2, (i,j)\in\left(j+4, \left\{[0,4N_1-4]:{\rm mod}(j,4)=0\right\}\right)\right\}$ \\
        \hline
        $5$ & $\left\{(b_{ij},i,j): b_{ij} = b_2, (i,j)\in\left(j, \left\{[4,4N_1]:{\rm mod}(j,4)=0\right\}\right)\right\}$ \\
        \hline
        $6$ & $\left\{(b_{ij},i,j): b_{ij} = b_3, (i,j)\in\left(j, \left\{4N_1+4\right\}\right)\right\}$ \\
        \hline
        $7$ & $\left\{(b_{ij},i,j): b_{ij} = b_4, (i,j)\in\left(j, [4N_1+5,4N_1+N_2+3]\right)\right\}$ \\
        \hline
        $8$ & $\left\{(b_{ij},i,j): b_{ij} = b_5, (i,j)\in\left(j, \left\{4N_1+N_2+4\right\}\right)\right\}$ \\
        \hline
    \end{tabular}
    \caption{The dictionary data structure of the matrix $B$ of GEPs in Equation~\eqref{equation: GEPs}.}
    \label{table: dictionary of B}
\end{table}

\end{document}